\newcommand{\call}{\mathcal{C}_\mathrm{all}}
\newcommand{\cstart}{\mathcal{C}_\mathrm{start}}
\newcommand{\ccolor}{\mathcal{C}_\mathrm{color}}
\newcommand{\bstart}{\mathcal{B}_\mathrm{start}}
\newcommand{\ie}{{\em i.e.,~}}
\newcommand{\eg}{{\em e.g.,~}}
\newcommand{\plog}{P_{LL}}
\newcommand{\cmax}{c_{\mathrm{max}}}
\newcommand{\lmax}{l_{\mathrm{max}}}
\newcommand{\smax}{s_{\mathrm{max}}}
\newcommand{\rmax}{r_{\mathrm{max}}}
\newcommand{\rsize}{\Phi}
\newcommand{\tr}{\mathit{true}}
\newcommand{\fl}{\mathit{false}}
\newcommand{\cand}{A}
\newcommand{\timer}{B}
\newcommand{\initial}{X}
\newcommand{\vtimer}{V_{\timer}}
\newcommand{\vcand}{V_{\cand}}
\newcommand{\vl}{V_{L}}
\newcommand{\vf}{V_{F}}
\newcommand{\vinitial}{V_{\initial}}
\newcommand{\var}{\mathtt{var}}
\newcommand{\leader}{\mathtt{leader}}
\newcommand{\level}{\mathtt{level}}
\newcommand{\levelq}{\level_Q}
\newcommand{\levelb}{\level_B}
\newcommand{\cnt}{\mathtt{count}}
\newcommand{\rand}{\mathtt{rand}}
\newcommand{\ind}{\mathtt{index}}
\newcommand{\epoch}{\mathtt{epoch}}
\newcommand{\done}{\mathtt{done}}
\newcommand{\init}{\mathtt{init}}
\newcommand{\status}{\mathtt{status}}
\newcommand{\tick}{\mathtt{tick}}
\newcommand{\clr}{\mathtt{color}}
\newcommand{\clog}[1]{\lceil \log_2 #1 \rceil}
\newcommand{\sinit}{s_{\mathrm{init}}}
\newcommand{\outputs}{\pi_{\mathrm{out}}}
\newcommand{\cinit}[1]{C_{\mathrm{init},#1}}
\newcommand{\countup}{\mathit{CountUp}}
\newcommand{\quick}{\mathit{QuickElimination}}
\newcommand{\tourn}{\mathit{Tournament}}
\newcommand{\backup}{\mathit{BackUp}}
\newcommand{\ifthen}[2]{\textbf{if} #1 \textbf{then} #2 \textbf{endif}}
\newcommand{\foralldo}[2]{\textbf{for all} #1 \textbf{do} #2 \textbf{endfor}}
\newtheorem{theorem}{Theorem}
\newtheorem{lemma}{Lemma}
\theoremstyle{definition}
\newtheorem{definition}{Definition}
\newcommand{\ex}{\mathbf{E}}
\newcommand{\unit}{\left \lfloor 21 n \ln n \right \rfloor}
\newcommand{\smallunit}{\left \lfloor 9 n \ln n \right \rfloor}
\newcommand{\loglogn}{\left \lceil \lg \lg n \right \rceil}
\title{Logarithmic Expected-Time Leader Election in Population Protocol Model}
\date{}
\author[1]{Yuichi Sudo\thanks{Corresponding author:y-sudou[at]ist.osaka-u.ac.jp}}
\author[2]{Fukuhito Ooshita}
\author[3]{Taisuke Izumi}
\author[4]{Hirotsugu Kakugawa}
\author[1]{Toshimitsu Masuzawa}
\affil[1]{Graduate School of Information Science and Technology, Osaka University, Japan}
\affil[2]{Graduate School of Science and Technology, Nara Institute of Science and Technology, Japan}
\affil[3]{Graduate School of Engineering, Nagoya Institute of Technology, Japan}
\affil[4]{Faculty of Science and Technology, Ryukoku University, Japan}
\begin{document}
\maketitle
\begin{abstract}
\normalsize
In this paper, we present the \emph{first} leader election protocol in the population protocol model that
stabilizes $O(\log n)$ parallel time
in expectation with $O(\log n)$ states per agent,
where $n$ is the number of agents.
Given a rough knowledge $m$ of the population size $n$
such that $m \ge \log_2 n$ and $m=O(\log n)$,
the proposed protocol guarantees that
exactly one leader is elected and the unique leader is kept forever thereafter.
\end{abstract}

\section{Introduction}
\label{sec:intro}

We consider the \emph{population protocol} (PP) model \cite{original} in
 this paper.
A network called \emph{population} consists of a large number of
 finite-state automata,
called \emph{agents}.
Agents make \emph{interactions}
 (i.e., pairwise communication) each other
by which they update their states.
The interactions are opportunistic,
that is, they are unpredictable.
%(or predictable only with probability).
Agents are strongly anonymous:
they do not have identifiers
and they cannot distinguish
their neighbors with the same states.
%They do not even know the number of their neighbors.
As with the majority of studies on population protocols,
we assume that the network of agents is a complete graph
%(or every pair of agents has possibility of interactions),
and that the scheduler
selects an interacting pair of agents at each step
uniformly at random.

In this paper, we focus on leader election problem, which is one of the most fundamental
and well studied problems in the PP model.
The leader election problem
requires that starting from a specific initial
configuration,
a population reaches a safe configuration in which exactly one leader exists
and the population keeps that unique leader
thereafter.

There have been many works which study the leader election problem in the PP model (Tables \ref{tbl:protocols} and 2). 
Angluin et al.~\cite{original}
gave the first leader election protocol,
which stabilizes in $O(n)$ parallel time in expectation
and uses
only constant space of each agent,
where $n$ is the number of agents
and ``parallel time'' means
the number of steps in an execution divided by $n$.
If we stick to constant space,
this linear parallel time is optimal;
Doty and Soloveichik \cite{DS18} showed that
any constant space protocol requires linear parallel time
to elect a unique leader.
Alistarh and Gelashvili \cite{AG15} made a breakthrough
in 2015;
they achieve poly-logarithmic stabilization time
($O(\log^3 n)$ parallel time)
by increasing the number of states from $O(1)$
to only $O(\log^3 n)$.
Thereafter, the stabilization time has been improved
by many studies \cite{BCER17,AAG18,GS18,GSU18,MST18}.
%To the best of our knowledge,
G{\k{a}}sieniec et al.~\cite{GSU18} gave
a state-of-art protocol that stabilizes in
$O(\log n \cdot \log \log n)$ parallel time
with only $O(\log \log n)$ states.
%This is almost time-optimal because
%leader election in the PP model 
%requires $\Omega(\log n)$ parallel time.
Its space complexity is optimal;
Alistarh et al.~\cite{AAE+17} shows that
any leader election algorithm
with $o(n/(\mathrm{polylog}\ n))$ parallel time
requires $\Omega(\log \log n)$ states.
%the shortest stabilization time in the literature.
%Their protocol elects the unique leader
%within $O(\log n \cdot \log \log n)$ parallel time
%in expectation and uses $O(\log \log n)$ states per agent.
Michail et al.~\cite{MST18} gave a protocol
with $O(\log n)$ parallel time 
but with a linear number of states.
%any leader election algorithm requires 
%While their protocol use large number of states
Those protocols with non-constant
number of states~\cite{AG15,AAE+17,BCER17,AAG18,GS18,GSU18} are not \emph{uniform}, that is, they require
some rough knowledge of $n$.
For example,
in the protocol of \cite{GS18},
an $\Theta(\log \log n)$ value
must be hard-coded to set the maximum value of
one variable (named $l$ in that paper).
One can find detailed information about the leader election in the PP model in two survey papers \cite{AG18,ER18}.

The stabilization time of \cite{MST18} is optimal;
any leader election algorithm requires $\Omega(\log n)$ parallel time if it uses any large number of states and assumes the exact knowledge of population size $n$ \cite{SM19}.
At the beginning of an execution,
all the agents are in the same initial state
specified by a protocol.
Therefore, simple analysis on Coupon Collector's problem
shows that we cannot achieve $o(\log n)$ parallel stabilization time if an agent in the initial state is a leader.
The lower bound of \cite{SM19} shows that
we cannot achieve $o(\log n)$ parallel time
even if we define the initial state such that
all the agents are non-leaders initially.

%For simple coupon

\paragraph{Our Contribution}
In this paper, we present the \emph{first}
time-optimal leader election protocol $\plog$
%with logarithmic stabilization time and 
with sub-polynomial number of states.
Specifically, the proposed protocol $\plog$ stabilizes
in $O(\log n)$ parallel time and uses only
$O(\log n)$ states per agent.
%
%we also show that any leader election protocol never stabilizes
%within $o(\log n)$ parallel time in expectation (Appendix \ref{apesec:lowerbound}).
%The number of states per agent of $\plog$ is $O(\log n)$.
%As with all papers listed in Table \ref{tbl:protocols} except for the one of \cite{original},
%We assume that a rough knowledge
%of $n$ is available.
%Specifically,
Compared to the state or art protocol \cite{GSU18},
$\plog$ achieves shorter
(and best possible)
stabilization time but uses larger space of each agent.
Compared to \cite{MST18},
$\plog$ achieves drastically small space
while maintaining the same (and optimal) stabilization time.
The protocol $\plog$ is non-uniform
as with the existing non-constant space protocols;
it requires
a rough knowledge $m$ of $n$ such that $m \ge \log_2 n$
and $m=\Theta(\log n)$.

We give $\plog$ as an asymmetric protocol in the main part of this paper \emph{only for simplicity}
of presentation and analysis of stabilization time.
%We explain a way to make $\plog$ symmetric
%in Section \ref{sec:symmetric}.
Actually, we can change $\plog$ to a symmetric protocol,
which we discuss in Section \ref{sec:symmetric}.
In particular,
%the coin flips proposed in that section
that section proposes
the first implementation of totally independent and fair (\ie unbiased) coin flips in the symmetric version
of the PP model.
Although the implementation of coin flips
in \cite{AAE+17} is almost independent and fair, 
the totally independent and fair coin clips
achieved in this paper can contribute
a simple analysis in a variety kind of
protocols in the PP model.

%\red{We also show that any leader election protocol
%requires $\Omega(\log n)$ stabilization time in expectation (Appendix \ref{apesec:lowerbound}), for completeness.
%Several works (\eg \cite{AG15}) claims this lower bound,
%but no paper gave formal proof for it in the literature,
%to the best of our knowledge.
%This lower bound is not trivial when
%we consider a protocol
%such that all agents are \emph{followers}
%at the initial configuration.}

\begin{table}
\center
\caption{Leader Election Protocols. (Stabilization time is shown in terms of parallel time and in expectation.)}
\label{tbl:protocols}
\vspace{0.3cm}

\begin{tabular}{c c c}
\hline
 & States & Stabilization Time\\
\hline
\cite{original}
& $O(1)$ & $O(n)$\\
\cite{AG15}
&$O(\log^3 n)$ & $O(\log^3 n)$\\
\cite{AAE+17}& $O(\log^2 n)$
&$O(\log^{5.3}n \cdot \log \log n)$\\
\cite{AAG18}&
$O(\log n)$&$O(\log^2 n)$\\
\cite{GS18}&
$O(\log \log n)$&$O(\log^2 n)$\\
\cite{GSU18}&
$O(\log \log n)$&$O(\log n \cdot \log \log n)$\\
\cite{MST18}& $O(n)$&$O(\log n)$\\
This work&
$O(\log n)$&$O(\log n)$\\
\hline
\end{tabular}
\end{table}

\begin{table}
\center
\caption{Lower Bounds for Leader Election (Stabilization time is shown in terms of parallel time and in expectation.)}
\label{tbl:lower}
\vspace{0.3cm}

\begin{tabular}{c c c}
\hline
 & States & Stabilization Time\\
\hline
\cite{DS18} & $O(1)$ & $\Omega(n)$\\
\cite{AAE+17} & $<1/2 \log \log n$ & $\Omega(n/ (\mathrm{polylog}\ n))$\\
\cite{SM19} & any large  & $\Omega(\log n)$\\
\hline
\end{tabular}
\end{table}

\section{Preliminaries}
\label{sec:pre}
%In this section, we describe our model of computation.
%We denote the set of integers
%$\{z \in \mathbb{N} \mid x \le z \le y\}$ by $[x,y]$,
%and denote the $n^\tH$ harmonic number
%by $H_n = \sum_{k=1}^n \frac{1}{k}$.

A \emph{population} is
a network consisting of {\em agents}.
We denote the set of all the agents by $V$ and let $n = |V|$.
We assume that a population is complete graph,
thus every pair of agents $(u,v)$ can interact,
where $u$ serves as the \emph{initiator}
and $v$ serves as the \emph{responder} of the interaction.
Throughout this paper, we use the phrase
``with high probability'' to denote
probability $1-O(n^{-1})$.

A \emph{protocol} $P(Q,\sinit,T,Y,\outputs)$ consists of 
a finite set $Q$ of states,
an initial state $\sinit \in Q$,
a transition function
$T:  Q \times Q \to Q \times Q$,
a finite set $Y$ of output symbols, 
and an output function $\outputs : Q \to Y$.
Every agent is in state $\sinit$
when an execution of protocol $P$ begins.
When two agents interact,
$T$ determines their next states
according to their current states.
The \emph{output} of an agent is determined by $\outputs$:
the output of an agent in state $q$ is $\outputs(q)$.
In this paper, we assume that a rough knowledge
of an upper bound of $n$ is available.
Specifically, we assume that an integer
$m$ such that $m \ge \log_2 n$
and $m=\Theta(\log n)$ are given,
thus we can design $P(Q,\sinit,T,Y,\outputs)$
using this input $m$,
\ie the parameters $Q$,$\sinit$,$T$,$Y$,
and $\outputs$ can depend on $m$.

A \emph{configuration} is a mapping $C : V \to Q$ that specifies
the states of all the agents.
%We denote the set of all configurations of protocol $P$ by $\call(P)$.
We define $\cinit{P}$ as the configuration of $P$
where every agent is in state $\sinit$.
We say that a configuration $C$ changes to $C'$ by the interaction
$e = (u,v)$,
denoted by $C \stackrel{e}{\to} C'$,
if
$(C'(u),C'(v))=T(C(u),C(v))$
and $C'(w) = C(w)$
for all $w \in V \setminus \{u,v\}$.

A \emph{schedule} $\gamma = \gamma_0,\gamma_1,\dots
=(u_0,v_0),(u_1,v_1),\dots~$
is a sequence of interactions.
A schedule determines which interaction occurs at each \emph{step},
{\ie} interaction $\gamma_t$ happens at step $t$
under schedule $\gamma$.
In particular, we consider a \emph{uniformly random scheduler}
$\Gamma=\Gamma_0, \Gamma_1,\dots$ in this paper:
each $\Gamma_t$ of the infinite sequence of interactions
is a random variable such that
$\Pr(\Gamma_t = (u,v)) =\frac{1}{n(n-1)}$
for any $t \ge 0$ and any distinct $u,v \in V$.
Note that we use capital letter $\Gamma$
for this uniform random scheduler
while we refer a deterministic schedule
with a lower case such as $\gamma$.
%SD: we should assume various initial configurations?
Given an initial configuration $C_0$ and
a schedule $\gamma$,
the \emph{execution} of protocol $P$ is uniquely defined as 
$\Xi_{P}(C_0,\gamma) = C_0,C_1,\dots$ such that
$C_t \stackrel{\gamma_t}{\to} C_{t+1}$ for all $t \ge 0$. 
Note that the execution $\Xi_{P}(C_0,\Gamma) = C_0,C_1,\dots$
under the uniformly random scheduler $\Gamma$
is a sequence of configurations where
each $C_i$ is a random variable.
For a schedule $\gamma = \gamma_0,\gamma_1,\dots$
and any $t \ge 0$,
we say that agent $v \in V$ \emph{participates} in $\gamma_t$
if $v$ is either the initiator or the responder of $\gamma_t$.
We say that a configuration $C$ of protocol $P$ is reachable 
if there exists a finite schedule $\gamma=\gamma_0,\gamma_1,\dots,\gamma_{t-1}$ such that
$\Xi_P(\cinit{P},\gamma)=C_0,C_1,\dots,C_{t}$ and $C=C_{t}$.
We define $\call(P)$ as the set of all reachable configurations of $P$.

The leader election problem requires that 
every agent should output $L$ or $F$ which means
``leader'' or ``follower'' respectively.
Let $\cal{S}_P$ be the set of configurations
such that, for any configuration $C \in \cal{S}_P$,
exactly one agent outputs $L$ (\ie is a leader) in $C$
and no agent changes its output in
%execution of $P$ starting from $C$ under any schedule $\gamma$.
execution $\Xi_P(C,\gamma)$ for any schedule $\gamma$.
We say that a protocol $P$
is a leader election protocol or
solves the leader election problem if
%an execution of $P$ under the uniformly random scheduler $\Gamma$
execution $\Xi_P(\cinit{P},\Gamma)$
reaches a configuration in $\cal{S}_P$ with probability $1$.
%such that there is exactly one leader in the configuration
%and in
For any leader election protocol $P$, 
we define the expected stabilization time of $P$
as the expected number of steps during which
execution $\Xi_P(\cinit{P},\Gamma)$
reaches a configuration in $\cal{S}_P$,
divided by the number of agents $n$.
The division by $n$ is needed
because we evaluate the stabilization time
in terms of parallel time.

We write the natural logarithm of $x$ as $\ln x$
and the logarithm of $x$ with base $2$ as $\lg x$.
We do not indicate the base of logarithm
in an asymptotical expression such as $O(\log n)$.
%we indicate the base of other logarithms of $x$, such as
%$\log_2 x$ except when we use asymptotical notation.
By an abuse of notation, we will identify an interaction $(u,v)$ with the set $\{u,v\}$ whenever convenient.

Throughout this paper, we will use
the following three variants of Chernoff bounds. 

\begin{lemma}[\cite{kyoukasyo}, Theorems 4.4, 4.5]
\label{lemma:chernoff}
Let $X_1,\dots,X_s$ be independent Poisson trials,
and let $X = \sum_{i=1}^s X_i$.
Then
\begin{align}
 \label{eq:upperdouble}
 \forall \delta,~0 \le \delta \le 1:~
 \Pr(X \ge (1+\delta)\ex[X]) &\le e^{-\delta^2\ex[X]/3},\\
% \label{eq:uppersix}
% \forall R \ge 6 \ex[X]:~
% \Pr(X \ge R) &\le 2^{-R},
 \label{eq:lowerhalf}
 \forall \delta,~0 < \delta < 1:~
 \Pr(X \le (1-\delta)\ex[X]) &\le e^{-\delta^2\ex[X]/2}.
\end{align}
\end{lemma}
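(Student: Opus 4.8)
This is the classical multiplicative Chernoff bound, so the plan is to use the exponential-moment method rather than anything tailored to population protocols. Write $\mu = \ex[X] = \sum_{i=1}^s p_i$ with $p_i = \Pr(X_i = 1)$. For a parameter $t > 0$ to be chosen, Markov's inequality applied to $e^{tX} \ge 0$ gives $\Pr(X \ge (1+\delta)\mu) \le e^{-t(1+\delta)\mu}\,\ex[e^{tX}]$, and independence of the $X_i$ lets me factor $\ex[e^{tX}] = \prod_i \ex[e^{tX_i}] = \prod_i (1 + p_i(e^t-1)) \le \exp\!\big(\mu(e^t-1)\big)$, where the last step is just $1+x \le e^x$. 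The lower-tail bound is the same computation with $t < 0$.

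First I would finish the upper bound \eqref{eq:upperdouble}. Optimizing the resulting estimate $\exp\!\big(\mu(e^t-1) - t(1+\delta)\mu\big)$ over $t>0$ selects $t=\ln(1+\delta)$ and produces the ``canonical'' form $\Pr(X \ge (1+\delta)\mu) \le \big(e^{\delta}/(1+\delta)^{1+\delta}\big)^{\mu}$. What remains is the one-variable inequality $f(\delta) := \delta - (1+\delta)\ln(1+\delta) + \delta^2/3 \le 0$ for $0 \le \delta \le 1$: here $f(0)=0$, $f'(\delta) = \tfrac23\delta - \ln(1+\delta)$ with $f'(0)=0$, and $f''(\delta)=\tfrac23 - \tfrac1{1+\delta}$ changes sign at $\delta=\tfrac12$, so $f'$ is not monotone; but $f'(0)=0$ together with $f'(1)=\tfrac23-\ln 2<0$ forces $f'\le 0$ on $[0,1]$, hence $f$ is nonincreasing and $f\le f(0)=0$.

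The lower bound \eqref{eq:lowerhalf} mirrors this: with $t=\ln(1-\delta)<0$ (legal since $0<\delta<1$) the estimate becomes $\big(e^{-\delta}/(1-\delta)^{1-\delta}\big)^{\mu}$, and I would close it with $g(\delta) := (1-\delta)\ln(1-\delta) + \delta - \delta^2/2 \ge 0$ on $(0,1)$, which is easier since $g(0)=0$, $g'(\delta)=-\delta-\ln(1-\delta)$, $g'(0)=0$ and $g''(\delta)=\delta/(1-\delta)\ge 0$, so $g'\ge 0$ and hence $g\ge 0$. The remaining variant used elsewhere in the paper, a deviation bound for the regime $\delta \ge 1$ such as $\Pr(X \ge (1+\delta)\mu) \le 2^{-(1+\delta)\mu}$, falls out of the same canonical estimate $\big(e^{\delta}/(1+\delta)^{1+\delta}\big)^{\mu}$ by an elementary bound once $\delta$ is bounded away from $0$.

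The only genuinely fiddly point is the pair of elementary lemmas on $(1\pm\delta)\ln(1\pm\delta)$, and specifically the upper-tail one, where the naive ``$f$ is convex'' argument fails because $f''$ changes sign inside $(0,1)$; the fix is to argue about $f'$ through its endpoint values as above. Everything else is bookkeeping with the moment generating function, and in any case the statement is quoted verbatim from a standard reference, so in the paper it is simply cited.
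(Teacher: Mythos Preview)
Your proposal is correct, and you have also correctly anticipated what the paper does: Lemma~\ref{lemma:chernoff} is stated with a citation to the textbook reference and given no proof in the paper itself. The argument you sketch is the standard moment-generating-function derivation that appears in that reference, so there is nothing to compare---your write-up simply supplies the proof that the paper omits by citation.
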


In the proposed protocol,
we often use \emph{one-way epidemic} \cite{fast}.
The notion of one-way epidemic is formalized as follows.
Let $\gamma=\gamma_0,\gamma_1,\dots$
be an infinite sequence of interactions,
$V'$ be a set of agents ($V'\subseteq V$),
and $r$ be an agent in $V'$.
The \emph{epidemic function}
$I_{V',r,\gamma}:[0,\infty) \rightarrow 2^V$
is defined as follows:
$I_{V',r,\gamma}(0) = \{r\}$,
and for $t = 1,2,\dots$,
$I_{V',r,\gamma}(t) = I_{V',r,\gamma}(t-1) \cup (\gamma_{t-1}\cap V')$
if $I_{V',r,\gamma}(t-1) \cap \gamma_{t-1} \neq \emptyset$;
otherwise, $I_{V',r,\gamma}(t) = I_{V',r,\gamma}(t-1)$.
We say that
$v$ is \emph{infected\/} at step $t$
if $v \in I_{V',r,\gamma}(t)$ 
in the epidemic in $V'$ and under $\gamma$ starting from agent $r$.
At step 0, only $r$ is infected; at later steps, an agent in $V'$
becomes infected if it interacts with an infected agent.
Once an agent becomes infected, it remains infected thereafter.

This abstract notion plays an important role in analyzing
the expected stabilization time of a population protocol.
For example, consider an execution
$\Xi_P(C_0,\Gamma)=C_0,C_1,\dots$ where agents in $V'$
have different values in variable $\var$ in configuration $C_0$
and the larger value is propagated
from agent to agent whenever two agents in $V'$
have an interaction.
Clearly, all agents in $V'$
have the maximum value of $\var$
when all agents in $V'$ are infected in one-way epidemic
in $V'$ and under $\Gamma$
starting from the agent with the maximum value $\var$
in configuration $C_0$.

Angluin et al.~\cite{original} prove that
one-way epidemic in the whole population $V$
from any agent $r \in V$ finishes
(\ie all agents are infected) within $\Theta(n \log n)$ interactions
with high probability.
Furthermore, Sudo et al.~\cite{kanjiko}
give a concrete lower bound on the probability that the epidemic
in the whole population
finishes within a given number of interactions. 
We generalize this lower bound for an epidemic in any set of agents (sub-population) $V' \subseteq V$ as follows while
the proof is almost the same as the one in \cite{kanjiko}.

\begin{lemma}
\label{lemma:epidemic}
Let $V' \subseteq V$, $r \in V'$, $n'=|V'|$,
and $t \in \mathtt{N}$.
%For any integer $t \ge 1$,
We have 
$\Pr(I_{V',r,\Gamma}(2\lceil n/n' \rceil t) \neq V') \le n e^{-t/n}$.
\end{lemma}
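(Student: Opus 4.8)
The plan is to reduce the epidemic in the sub-population $V'$ to an epidemic in a full population of size $n'$, and then reuse (the proof of) the existing full-population bound from \cite{kanjiko}. The key observation is a time-scaling argument: among the interactions $\Gamma_0,\Gamma_1,\dots$, only those with both endpoints in $V'$ can change $I_{V',r,\Gamma}$, and each interaction falls in that category independently with probability $\binom{n'}{2}/\binom{n}{2}$. So the subsequence of ``relevant'' interactions is itself distributed as a uniformly random scheduler on the complete graph over $V'$, and the number of original steps needed to see $k$ relevant interactions is a sum of $k$ independent geometric random variables with success probability roughly $(n'/n)^2$ (more precisely $\frac{n'(n'-1)}{n(n-1)} \ge (n'/n)^2 \cdot$ a constant; I would track constants carefully here, or simply bound $\lceil n/n'\rceil$ from above). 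First I would make this coupling precise: condition on the infinite subsequence of relevant interactions, apply the full-population result to that subsequence to say that $t'$ relevant interactions suffice to infect all of $V'$ except with probability at most $n' e^{-t'/n'}$ (or whatever exact form \cite{kanjiko} gives), and separately bound the probability that $2\lceil n/n'\rceil t$ original steps fail to contain enough relevant interactions.

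Concretely, the steps in order: (1) State the full-population bound from \cite{kanjiko} in the exact form its proof yields — presumably something like $\Pr(I_{V,r,\Gamma}(cn\ln n + \lambda n) \ne V)$ decays like $n e^{-\lambda}$, or more usefully a bound phrased directly in terms of the parameter $t$ so that it matches our target $n e^{-t/n}$. (2) Set up the coupling: let $\tau_1 < \tau_2 < \dots$ be the steps at which $\Gamma_{\tau_i}$ has both endpoints in $V'$; then $(\Gamma_{\tau_i})_i$ is an i.i.d.\ uniform scheduler on $V'$ and is independent of the gaps $\tau_i - \tau_{i-1}$, which are i.i.d.\ geometric. (3) Show that with high enough probability the window of $2\lceil n/n' \rceil t$ steps contains at least $t$ relevant interactions — here I expect to invoke the lower-tail Chernoff bound \eqref{eq:lowerhalf} on a sum of indicators (the number of relevant interactions in the window has mean about $\frac{n'(n'-1)}{n(n-1)} \cdot 2\lceil n/n'\rceil t \ge t$ up to constants). (4) Combine: condition on the event in (3), apply (1) to the $\ge t$ relevant interactions, and union-bound the two failure probabilities to get $n e^{-t/n}$ after absorbing constants.

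The main obstacle is bookkeeping the constants so that the two error terms both collapse into the single clean bound $n e^{-t/n}$ — in particular, the factor $2$ in $2\lceil n/n' \rceil t$ has to be exactly enough to (a) give the Chernoff step in (3) enough slack that its failure probability is dominated by $n e^{-t/n}$, and (b) leave enough relevant interactions for the full-population bound in (4) to still yield $n e^{-t/n}$ (not $n' e^{-t/n'}$, which would be weaker). I would handle this by being generous: the factor $\lceil n/n'\rceil$ already over-counts (since $\frac{n'-1}{n-1}$ is a bit larger than $\frac{n'}{n}$ scaling suggests when $n'$ is small, and the ceiling only helps), so after the Chernoff step one should comfortably have, say, $\ge t$ relevant interactions except with probability $\le e^{-\Omega(t/n)} \cdot$something, and then the statement of \cite{kanjiko} applied with $n'$ agents and $t$ relevant steps gives exactly the decay rate $e^{-t/n}$ we want once we note $n' \le n$ in the exponent. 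Since the paper explicitly says ``the proof is almost the same as the one in \cite{kanjiko},'' I would in fact lean on replicating that proof verbatim with $V'$ in place of $V$ and the scaled step count in place of the original one, rather than doing the coupling from scratch — the coupling sketch above is really just the conceptual justification for why that substitution is legitimate.
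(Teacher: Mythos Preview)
Your coupling approach is genuinely different from the paper's, and step (3) contains a real arithmetic slip that breaks the plan as stated. The probability that a single interaction has both endpoints in $V'$ is $\binom{n'}{2}/\binom{n}{2} \approx (n'/n)^2$, so in a window of $2\lceil n/n'\rceil t \approx 2(n/n')t$ steps the expected number of relevant interactions is about $2(n'/n)t$, \emph{not} $\ge t$; when $n' \ll n$ this is far short of $t$, so you cannot feed $t$ relevant steps into the $n'$-agent full-population bound. Even with the correct count $t' \approx 2(n'/n)t$, applying a bound of the shape $n' e^{-t'/(2n')}$ gives $n' e^{-t/n}$ only \emph{before} you pay for the Chernoff slack; reserving a $(1-\delta)$ fraction of the mean costs an error term $e^{-\Theta(\delta^2 (n'/n)t)}$, which for small $n'$ decays like $e^{-ct/n}$ with $c<1$ and does not collapse into $n e^{-t/n}$.

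The paper avoids all of this by never filtering to $V'$-internal interactions. It works directly with $\Gamma$: when $k$ agents in $V'$ are infected, the next step infects a new one with probability $k(n'-k)/\binom{n}{2}$; by the symmetry $k \leftrightarrow n'-k$, the time $X_{\mathrm{pre}}$ to go from one infected agent to $\lceil (n'+1)/2\rceil$ has the same distribution as the time $X_{\mathrm{post}}$ to go from $n'-\lceil (n'+1)/2\rceil+1$ to $n'$, and $X_{\mathrm{pre}}+X_{\mathrm{post}} \ge T(n')$. Hence it suffices to bound only the second half, where each specific uninfected $v$ is infected per step with probability at least $(n'/2)/\binom{n}{2} > n'/n^2$, giving $\Pr(X_v > \lceil n/n'\rceil t) \le (1-n'/n^2)^{(n/n')t} \le e^{-t/n}$, and then a union bound over at most $n'/2$ agents. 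No Chernoff, no coupling, and the constants fall out exactly. Your final fallback (``replicate the proof with $V'$ in place of $V$'') is in fact this argument, but it is not a coupling and is not justified by your sketch: the factor $\lceil n/n'\rceil$ arises because the per-step infection rate in the second half is $\Theta(n'/n^2)$ rather than $\Theta(1/n)$, not because one is waiting for $V'$-internal interactions.
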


\begin{proof}
For each $k\ (2\le k \le n')$,
we define $T(k)$ as integer $t$ such that
$|I_{V',r,\Gamma}(t-1)| = k-1$ and 
$|I_{V',r,\Gamma}(t)| = k$,
and define $T(1) = 0$.
Intuitively, $T(k)$ is the number of interactions required to
infect $k$ agents in $V'$.
Let $X_\mathrm{pre} = T(\lceil \frac{n'+1}{2} \rceil)$
and $X_\mathrm{post} = T(n') - T(n' - \lceil \frac{n'+1}{2} \rceil + 1)$.

%Angluin et al.\ found  in \cite{fast}
Let $k$ be any integer such that $1 \le k \le n'$.
When $k$ agents are infected,
an agent is newly infected with probability $k(n'-k)/_n C_2$ at every step.
When $n'-k$ agents are infected,
an agent is newly infected also with probability $k(n'-k)/_n C_2$ at every step.
Therefore, $T(k+1)-T(k)$ and $T(n'-k+1)-T(n'-k)$ have the same probability distribution.
Thus, $X_\mathrm{pre} = T(\lceil \frac{n'+1}{2} \rceil) = \sum_{j=1}^{\lceil (n'+1)/2 \rceil-1}T(j+1)-T(j)$
and $X_\mathrm{post} = T(n')-T(n'-\lceil \frac{n'+1}{2}\rceil+1)=\sum_{j=1}^{\lceil (n'+1)/2 \rceil-1} T(n'-j+1)-T(n'-j)$
have the same probability distribution. 
%$T(k)$ and $T(n')-T(n'-k+1)$
%have the same probability distribution for any $k\ (1 \le k \le n')$.
%Hence, so do $X_\mathrm{pre}$ and $X_\mathrm{post}$.
Moreover, $X_\mathrm{pre} + X_\mathrm{post} \ge T(n')$ holds because
$\lceil \frac{n'+1}{2} \rceil \ge n' - \lceil \frac{n'+1}{2} \rceil + 1$.

In what follows, we bound the probability that $X_\mathrm{post} > \lceil n/n'\rceil t$. 
We denote $T(n'-\lceil \frac{n'+1}{2} \rceil + 1)$ by $T_\mathrm{half}$.
%For any agent $v \in V$,
For any agent $v \in V$,
let $T_v$ be the minimum non-negative integer such that $v \in I_{V',r,\Gamma}(T_v)$,
\ie agent $v$ becomes infected at the $T_v$-th step.
We define $X_v = \max(T_{C_0,\Gamma}(v)-T_\mathrm{half},0)$.
%Informally, $X_v$ is the number of interactions that occurs
%between the $T_\mathrm{half}$-th interaction and the interaction at which agent $v$ becomes infected.
Consider the case  $v \notin I_{V',r,\Gamma}(T_\mathrm{half})$.
At any step $t \ge T_\mathrm{half}$,
at least $n'-\lceil \frac{n'+1}{2}\rceil + 1\ (\ge \frac{n'}{2})$
agents are infected.
Therefore, each interaction $\Gamma_t$ such that $(t \ge T_\mathrm{half})$ infects $v$
with the probability at least $\frac{1}{{}_n C_2}\cdot\frac{n'}{2} > \frac{n'}{n^2}$,
hence we have
$\Pr(X_v >\lceil n/n' \rceil t) \le \left(1 -\frac{n'}{n^2}\right)^{ nt/n'} \le e^{-t/n}$.
Since the number of non-infected agents at step $T_\mathrm{half}$
is at most $n'/2$,
$
\Pr(X_\mathrm{post} >\lceil n/n' \rceil t)
\le  \Pr(\bigvee_{v\in V} (X_v >\lceil n/n' \rceil t)) 
\le \frac{n'}{2}\cdot e^{-t/n}$ holds.

By the equivalence of the distribution of
$X_\mathrm{pre}$ and $X_\mathrm{post}$,
we have
\begin{align*}
\Pr\left(I_{V',r,\Gamma}\left(2\lceil n/n'\rceil t \right)
\neq V' \right)
\le \Pr\left(X_\mathrm{pre} > \lceil n/n'\rceil t \right)
+ \Pr\left(X_\mathrm{post} > \lceil n/n'\rceil t \right)
\le n e^{-t/n}.
\end{align*} 
\end{proof}

\section{Logarithmic Leader Election}
\label{sec:pll}
\subsection{Key Ideas}
\label{sec:key}
In this subsection, we give key ideas
of the proposed protocol $\plog$.
%Since we give only key ideas of the protocol,
%some variable explained in this subsection
%may differ from those explained in the next subsection,
%which gives the detailed description.
%Throughout an execution of the protocol,
Each agent $v$ keeps output variable $v.\leader \in \{\fl,\tr\}$.
An agent outputs $L$ when the value of $\leader$ is $\tr$
and it outputs $F$ when it is $\fl$.
%We call agents with $\leader=\tr$ \emph{leaders}
%and those with $\leader=\fl$ \emph{followers}.
An execution of $\plog$ can be regarded as a competition
by agents.
At the beginning of the execution,
every agent has $\leader=\tr$, that is,
all agents are leaders.
Throughout the execution,
every leader tries to remain a leader
and tries to make all other leaders followers
so that it becomes the unique leader in the population.
The competition consists of
three modules $\quick()$, $\tourn()$,
and $\backup()$,
which are executed in this order.
These three modules guarantees the following properties:
\begin{description}
 \item[$\quick()$:]
An execution of this module takes $O(\log n)$ parallel time
in expectation.
For any $i \ge 2$,
exactly $i$ leaders survive an execution of $\quick()$
with probability at most $2^{1-i}$.
%decreases the number of leaders
%to at most $\log_2 \log_2 n$
%with probability $1-O(1/\log n)$.
The execution never eliminates all leaders,
i.e., at least one leader always survives.
%each agent $v$ computes variable $v.\levelq = O(\log n)$
%such that, for any $i \ge 2$,
%the number of the agents with the maximum $\levelq$ in the population is exactly $i$,
%i.e., $|\{v \in V \mid v.\levelq = \max_{u \in V}u.\levelq\}|=i$ holds, with probability at most $2^{1-i}$.
 \item[$\tourn()$:]
An execution of this module takes $O(\log n)$ parallel time
in expectation.
By an execution of $\tourn()$, which starts with $i\ge 2$ leaders,
the unique leader is elected with probability at least
$1-O(i/\log n)$.
This lower bound of probability is independent of
an execution of the previous module $\quick()$.
The execution never eliminates all leaders,
i.e., at least one leader always survives.
%each agent $v$ computees variable $v.\rand=O(\log n)$
%such that, for any $i \ge 2$ and any subset $V' \subseteq V, |V| = i$, the number of the agents with the maximum $rand$ in the subset $V'$ is more than one, i.e.,
%$|\{v \in V' \mid v.\rand = \max_{u \in V'}u.\rand\}| > 1$ holds,
%with probability at most $O(i/\log n)$.
\item[$\backup()$:]
%One execution of this module takes $O(\log^2 n)$ parallel time
%in expectation.
An execution of this component elects a unique leader
within $O(\log^2 n)$ parallel time in expectation.
\end{description}

From above, it holds that, after executions of
$\quick()$ and $\tourn()$ finish, 
the number of leaders is exactly one
with probability at least
$1-\sum_{i=2}^{n}O\left(\frac{i}{2^{i-1}\log n}\right)=1-O(1/\log n)$.
Therefore, combined with $\backup()$,
protocol $\plog$ elects a unique leader
within $(1-O(1/\log n))\cdot \log n + O(1/\log n)\cdot O(\log ^2 n)=O(\log n)$ parallel time
in expectation.

In the remainder of this subsection,
we briefly give key ideas to design the three modules
satisfying the above guarantees.
We will present a way to implement the following ideas
with $O(\log n)$ states per agent in the next subsection (Section \ref{sec:detailed}). In this subsection, keep in mind only that
these ideas are easily implemented with poly-logarithmic number of states per agent, that is, with a constant number of variables
with $O(\log \log n)$ bits. For the following description of the key ideas, we assume a kind of global synchronization,
for example, we assume that each agent begins an execution
of $\tourn()$ after \emph{all} agents finish necessary operations
of $\quick()$. We also present a way to implement such
a synchronization in Section \ref{sec:detailed}.

\subsubsection{Key Idea for $\quick()$}
\label{sec:keyquick}
% we give the key idea of $\quick()$.
The goal of this module
is to reduce the number of leaders
such that, for any $i \ge 2$,
the resulting number of leaders is exactly $i$
with probability at most $2^{1-i}$
while guaranteeing that
not all leaders are eliminated.
This module is based on
almost the same idea as \emph{the lottery protocol}
in \cite{AAE+17}.
The protocol $\plog$ achieves much faster stabilization time than the lottery protocol
thanks to tighter analysis on the number of surviving
leaders, which we will see below,
and the combination with the other two modules. 

First, consider the following game:
\begin{itemize}
 \item (i) Each agent in $V$ executes
a sequence of independent fair coin flips,
each of which results in head with probability
$1/2$ and tail with probability $1/2$,
until it observes tail for the first time,
 \item (ii) Let $s_v$ be the number of heads that $v$ observes
in the above coin flips and let $\smax=\max_{v \in V}s_v$
 \item (iii) The agents $v$ with $s_v = \smax$ are winners
and the other agents are losers.
\end{itemize}
%Despite the simplicity of this game,
%to the best of our knowledge,
%we find no analysis in the literature
%on the probabilistic distribution of
%the number of winners of the game.
%We say that an agent \emph{suceeds} in a coin flip
%if the result of the flip is head;
%Otherwise the agent {failed} in the flip.
%This game is a competition by the number of consecutive
%For any $i \ge 2$, $\smax=i$ means that,
%for some $k$,
%each of $i$ agents always get head at their first $k$ flips
%and get tail at their $k+1$-th interaction.
%%In what follows, we analyze this probability.
Let $i \ge 2$ and $j \ge 0$.
Consider the situation that exactly $i$ agents
observe that their first $j$ coin flips
result in head and define $p_{i,j}$ as the probability that
%the number of winners in the end of the game
%is also exactly $i$,
all the $i$ agents wins the game in the end
%i.e, $\smax=i$ holds,
starting from this situation.
%Starting from this situtation, the next coin 
Starting from this situation,
if all the $i$ agents observe tail in their $j+1$-st coin flips
then exactly $i$ agents win the game
with probability $1$;
if all the $i$ agents observe head in their $j+1$-st coin flips
then exactly $i$ agents win with probability $p_{i,j+1}$;
Otherwise, the number of winners of the game is
less than $i$ with probability $1$.
Therefore, we have $p_{i,j} = 2^{-i}+2^{-i}\cdot p_{i,j+1}$.
Since we have $p_{i,j}=p_{i,j+1}$ thanks to memoryless property
of this game, solving this equality gives $p_{i,j}=1/(2^i-1) \le 2^{1-i}$.
%This inequlity holds regardless of $i$.
Let $k_i$ be the minimum integer $j$ such that
exactly $i$ agents
observes that all of their first $j$ coin flips
result in head.
We define $k_n=0$ for simplicity.
Then, for any $i \ge 0$, we have
\begin{align*}
\Pr(|\{v \in V \mid s_v = \smax\}|=i)
= \sum_{j=0}^{\infty}
\Pr(k_i = j)
\cdot p_{i,j}
\le 2^{1-i} \sum_{j=0}^{\infty} \Pr(k_i = j)
\le 2^{1-i}.
\end{align*}

Module $\quick()$ simulates this game
in the population protocol model.
Every time an agent $v$ has an interaction,
we regard the interaction as the coin flip by $v$.
If $v$ is an initiator at the interaction, 
we regard the result of the coin flip as head;
Otherwise we regard it as tail.
The correctness of this simulation for coin flips
comes from the definition of the uniformly random
scheduler: at each step, 
an interaction where $v$ is an initiator
happens with probability $1/n$
and an interaction where $v$ is a responder
also happens with probability $1/n$.
Strictly speaking, this simple simulation of coin flips
does not guarantee independence of coin flips by
$u$ and $v$
for any distinct $u,v \in V$.
However, the actual $\plog$ defined
in Section \ref{sec:detailed}
completely simulates independent coin flips of leaders
and we will explain it in Section \ref{sec:detailed}.
%\footnotetext{
%Strictly speaking, this simulation of coin flips
%does not guarantee independence of coin flips by
%$u$ and $v$
%for any distinct $u,v \in V$. However,
%we do not need to mind 
%such kind of dependence between agents
%because it only improves the upper bound of
%$\Pr(|\{v \mid V \mid s_v=\smax\}|=i)$
%for any $i\ge 2$.
%Furthermore, the actual $\plog$ defined in Section \ref{sec:detailed} completely simulates independent coin flips by leaders
%(since independence is necessary
%for another module $\tourn()$).
%}
Each agent $v$ computes and stores $s_v$ on variable $v.\levelq$
by counting 
the number of interactions that it participates in
as an initiator
until it interacts as a responder for the first time.
%If we ignore dependence of interactions between agents
%Let $l_v$ be the value of $v.\levelb$ finally computed by
%this module.
%By simple observation, for each agent $v \in V$ and $i \ge 1$,
%$l_v=i$ holds with probability $2^{-i}$.
After every agent $v$ computes $s_v$ on $v.\levelq$,
the maximum value of $\levelq$,
i.e., $\smax$, is propagated from agent to agent
via \emph{one-way epidemic} \cite{fast},
that is,
\begin{itemize}
 \item each agent memorizes the largest value of $\levelq$ it has observed, and
 \item the larger value is propagated to
the agent with smaller value
at every interaction.
\end{itemize}
It is proven in \cite{fast}
that all agents obtain the largest value within $O(\log n)$ parallel time with high probability by this simple propagation.
If agent $v$ knows $s_v < \smax$, 
$v$ changes $v.\leader$ from $\tr$ to $\fl$, that is,
$v$ becomes a follower.
Thus, when one-way epidemic of $\smax$ finishes,
only the agents $v$ satisfying $s_v=\smax$
are leaders.
From the above discussion, for any $i \ge 2$,
the number of such surviving leaders is exactly $i$
with probability at most $2^{1-i}$.
On the other hand, there are at least one agent $v$
with $s_v = \smax$, thus this module never
eliminates all leaders.
%Although we have to define the domain of variable $\levelb$,
A logarithmic number of states is sufficient
for $\levelq$ 
because each agent $v$
gets more than $c \lg n$ consecutive heads
with probability at most $n^{-c}$ for any $c \ge 1$.

\subsubsection{Key Idea for $\tourn()$}
\label{sec:keytourn}
Starting from a configuration
where the number of leaders is $i$,
the goal of $\tourn()$ is to 
reduce the number of leaders
from $i$ to one with probability $1-O(i/ \log n)$
while guaranteeing that
not all leaders are eliminated.
The idea of this component is simple.
As with the $\quick()$,
we use coin flips in $\tourn()$.
Every leader $v$ maintains variable $v.\rand$.
Initially, $v.\rand = 0$.
Every time it has an interaction,
it updates $v.\rand$ by
$v.\rand \gets 2v.\rand + j$ where
$j$ indicates whether $v$ is a responder in the interaction
or not,
\ie $j=0$ if $v$ is a initiator
and $j=1$ if $v$ is a responder.
%$0$ if $v$ is initiator in the interaction
%and indicates $1$ otherwise.
This operation stops when $v$ encounters
$\clog{m} = O(\log \log n)$ interactions.
Thus, when all the $i$ leaders encounter
at least $\clog{m}$ interactions,
for every leader $v$,
$v.\rand$ is a random variable
uniformly chosen
from $\{0,1,\dots,2^{\clog{m}}-1\}$.
%$\{0,1,\dots,2^{\clog{m}}-1\} \subseteq \{0,1,\dots,2m-1\}$.
Although $u.\rand$ and $v.\rand$
are not independent of each other for any distinct leader
$u$ and $v$, 
we will present a way to remove any dependence
between $u.\rand$ and $v.\rand$ in Section \ref{sec:detailed}.
As with $\quick()$, the maximum value $\rand$ is propagated
to the whole population via one-way epidemic
within $O(\log n)$ parallel time with high probability
and only leaders with the maximum value remains leaders
in the end of $\tourn()$.

Let $v_1,v_2,\dots,v_i$ be the $i$ leaders
that survive $\quick()$.
%(and remain learders
%at the beginning of an exeuction of $\tourn()$).
Let $r_1,r_2,\dots,r_i$ be the resulting value
of $v_i.\rand$
and define $\rmax(j)=\max(r_1,r_2,\dots,r_j)$ for any
$j = 1,2,\dots,i$.
Clearly, the number of leaders at the end of $\tourn()$
is exactly one if $r_{j+1} \neq \rmax(j)$ holds
for all $j=1,2,\dots,i-1$.
By the union bound and independence between $r_1,r_2,\dots,r_i$,
this holds with probability at least
$1-\sum_{j=1}^{i-1} 2^{-\clog{m}} \ge 1-i/m \ge 1-i/(\lg n)$.
On the other hand, an execution of $\tourn()$ never eliminates all leaders since there always at least one leader $v_j$ that satisfies $r_j = \rmax(i)$.

\subsubsection{Key Idea for $\backup()$}
\label{sec:keybackup}
The goal of $\backup()$
is to elect a unique leader within
$O(\log^2 n)$ parallel time
in expectation.
We must guarantee this expected time
regardless of the number of the agents
that survive both $\quick()$ and $\tourn()$
and remain leaders at the beginning
of an execution of $\backup()$.
We can only assume that at least one leader
exists at the beginning of the execution.
%In the rest of this subsection,
%we consider the worst case
%where all agents are leaders in the beginning
%of this execution.
We use coin flips also for $\backup()$.
%as with $\quick()$ and $\tourn()$.
Every leader $v$ maintains $v.\levelb$.
Initially, $v.\levelb = 0$.
Every leader $v$ repeats the following procedure
until $v.\levelb$ reaches $5m$
or $v$ becomes a follower.
\begin{itemize}
 \item Make a coin flip. If the result is head
(i.e., $v$ participates in an interaction as an initiator),
$v$ increments $v.\levelb$ by one. If the result is tail,
$v$ does nothing.
 \item Wait for sufficiently long but logarithmic parallel time
so that the maximum $\levelb$ propagates to the whole population
via one-way epidemic. If it observes larger value in the epidemic,
it becomes a follower, that is, it executes $v.\leader \gets \fl$.
Furthermore, if $v$ interacts with another leader with the same level during this period and $v$ is a responder in the interaction, 
$v$ becomes a follower.
\end{itemize}
Let $j$ be an arbitrary integer such that $1 \le i \le 5m$.
Consider the first time that $\levelb$ of some leader, say $v$, reaches $j$.
Let $V' \subseteq V$ be the set of leaders at that time.
By the definition of the above procedure,
every $u \in V'$ other than $v$ satisfies $u.\levelb < j$,
and $u$ makes a coin flip at most once
with high probability
until the maximum value $j$ is propagated from $v$ to $u$.
If the result of the one coin flip is tail,
$u$ becomes a follower.
Therefore, with probability at least $1/2-O(n^{-1})>1/3$,
no less than half of leaders in $V'\setminus{v}$
becomes followers,
that is, the number of leaders
decreases to at most $1+\lfloor|V'|/2 \rfloor$.
Chernoff bound guarantees that
the number of leaders becomes one
with high probability
until $v.\levelb$ for every leader $v$ reaches $5m$.
Even if multiple leaders survive at that time,
we have simple election mechanism to elect a unique leader;
when two leaders with the same level interacts with each other,
one of them becomes a follower.
This simple election mechanism elects a unique leader
within $O(n)$ parallel time in expectation.
Therefore, the total expected parallel time to elect a unique leader is $O(m \log n) + O(n^{-1})\cdot O(n)= O(\log^2 n)$.

\subsection{Detailed Description}
\label{sec:detailed}
In this subsection, we present detailed description of the proposed protocol $\plog$.
The key ideas
presented in the previous subsection
achieve $O(\log n)$ stabilization time
if it is implemented correctly.
However, they need some kind of global
synchronization.
Furthermore, a naive implementation of the key ideas requires a poly-logarithmic number of states (\ie $O(\log^c n)$ states for $c > 1$) per agent while our goal is to achieve $O(\log n)$ states per agent.
%Therefore, we need techniques
%to implement the idea with a logarithmic number of states.
In this subsection, we will give how we achieve synchronization and implement the ideas shown in Section \ref{sec:key} with only $O(\log n)$ states per agent.

\begin{table}
\caption{Variables of $\plog$}
\label{tbl:variables}
\center
\begin{tabular}{|l|l|l|}
 \hline
 Groups & Variables & Initial values\\ \hline 
 \multirow{6}{*}{All agents}
 &  $\leader \in \{\fl,\tr\}$ & $\tr$ \\ \cline{2-3}
 &  $\tick \in \{\fl,\tr\}$ & $\fl$ \\ \cline{2-3}
 &  $\status \in \{\initial,\cand,\timer\}$ & $\initial$ \\ \cline{2-3}
 &  $\epoch \in \{1,2,3,4\}$ & $1$ \\ \cline{2-3}
 &  $\init \in \{1,2,3,4\}$ & $1$ \\ \cline{2-3}
 &  $\clr \in \{0,1,2\}$ & $0$ \\ \hline
 $\vtimer$ & $\cnt \in \{0,1,\dots,\cmax-1\}$ & Undefined\\ \hline
 \multirow{2}{*}{$\vcand \cap V_1$}
 & $\levelq \in \{0,1,\dots,\lmax\}$ & Undefined \\ \cline{2-3}
 & $\done \in \{\fl,\tr\}$ & Undefined \\ \hline
 \multirow{2}{*}{$\vcand \cap (V_2 \cup V_3)$}
 & $\rand \in \{0,1,\dots,2^{\rsize}-1\}$ & Undefined \\ \cline{2-3}
 & $\ind \in \{0,1,\dots,\rsize-1\}$ & Undefined\\ \hline
 $\vcand \cap V_4$& $\levelb \in \{0,1,\dots,\lmax\}$ & Undefined\\ \hline
\end{tabular}
\end{table}

All variables of $\plog$ are listed in Table \ref{tbl:variables}.
All agents manage six variables
$\leader$, $\tick$, $\status$, $\epoch$, $\init$, and $\clr$.
To implement the key ideas above with $O(\log n)$ states,
we divide the population into multiple sub-populations
or \emph{groups}, as in \cite{GSU18},
where agents in different groups manage different variables
in addition to the above six variables.
In the remainder of this paper, 
we refer the above six variables by \emph{common variables}
and other variables by \emph{additional variables}.
%Specifically, we devide the population into four groups
The population is divided to six groups 
based on two common variables $\status \in \{\initial,\cand,\timer\}$ and $\epoch \in \{1,2,3,4\}$,
that is,
$\vinitial$, $\vtimer$, $\vcand \cap V_1$,
$\vcand \cap (V_2 \cup V_3)$, $\vcand \cap V_4$
where we denote $V_Z= \{v \in V \mid v.\status = Z\}$
for $Z \in \{\initial,\cand,\timer\}$
and $V_i=\{v \in V \mid v.\epoch = i\}$
for $i \in \{1,2,3,4\}$.
We have no additional variables for agents
in group $\vinitial$,
one additional variable $\cnt \in \{0,1,\dots,\cmax-1\}$
for agents in $\vtimer$
where $\cmax = 41 m$,
two additional variables $\levelq \in \{0,1,\dots,\lmax\}$
and $\done \in \{\fl,\tr\}$ for agents in $\vcand \cap V_1$
where $\lmax = 5m$,
two additional variables $\rand \in \{0,1,\dots,2^\rsize-1\}$
and $\ind \in \{0,1,\dots,\rsize - 1\}$
for agents in $\vcand \cap (V_2 \cup V_3)$
where $\rsize = \lceil \frac{2}{3} \lg m \rceil$,
and
one additional variable $\levelb \in \{0,1,\dots,\lmax\}$
for agents in $\vcand \cap V_4$.
Agents in any group have only $O(\log n)$ states.
This is because every common variable has constant size domain,
every group other than $\vcand \cap (V_2 \cup V_3)$
has at most one non-constant additional variable
and any of such variables can take $O(\log n)$ values, 
and an agent in $\vcand \cap (V_2 \cup V_3)$
has two additional variables $\rand$ and $\ind$
and the combination of the two variables can take
$2^{\rsize} \cdot \rsize = O(m^{2/3} \log m) \subset O(\log n)$ values.
Therefore, the number of states per agent used by $\plog$ is
$O(\log n)$.

\begin{lemma}
The number of states per agent used by $\plog$ is $O(\log n)$.
\end{lemma}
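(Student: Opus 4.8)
The plan is to verify, group by group, that each agent stores only $O(\log n)$ states, and then take the maximum over the (constantly many) groups. Since the state of an agent is the tuple of values of all variables it manages, the number of states of an agent is the product of the domain sizes of its variables, so it suffices to bound that product by $O(\log n)$ in every group.

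First I would handle the common variables shared by all agents: $\leader$, $\tick$, $\status$, $\epoch$, $\init$, and $\clr$ all have domains of constant size ($2$, $2$, $3$, $4$, $4$, $3$ respectively), so their combined contribution is a constant factor $C_0 = 2\cdot 2\cdot 3\cdot 4\cdot 4\cdot 3$. Then I would go through the six groups in turn. For $\vinitial$ there are no additional variables, so the count is $C_0 = O(1)$. For $\vtimer$ the only additional variable is $\cnt$ with domain size $\cmax = 41m = O(m) = O(\log n)$, using $m = \Theta(\log n)$; so the count is $C_0\cdot 41 m = O(\log n)$. For $\vcand \cap V_1$ the additional variables are $\levelq$ with $\lmax+1 = 5m+1 = O(\log n)$ values and $\done$ with $2$ values, giving $C_0\cdot 2\cdot(5m+1) = O(\log n)$. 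For $\vcand \cap V_4$ the only additional variable is $\levelb$ with $\lmax+1 = O(\log n)$ values, giving $O(\log n)$ again.

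The one group that needs a moment's care is $\vcand \cap (V_2\cup V_3)$, which carries two non-constant additional variables, $\rand$ with $2^{\rsize}$ values and $\ind$ with $\rsize$ values, where $\rsize = \lceil \tfrac{2}{3}\lg m\rceil$. Here the product is $2^{\rsize}\cdot \rsize$. Since $\rsize \le \tfrac{2}{3}\lg m + 1$, we get $2^{\rsize} \le 2\, m^{2/3}$, and $\rsize = O(\log m) = O(\log\log n)$, so $2^{\rsize}\cdot \rsize = O(m^{2/3}\log m) = o(\log n)$, hence the count $C_0\cdot 2^{\rsize}\cdot\rsize$ is $O(\log n)$. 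This is essentially the computation already sketched in the paragraph preceding the statement, so the ``obstacle'' is merely bookkeeping: confirming that the two-variable product in this group stays below $\log n$ rather than blowing up to $\mathrm{polylog}(n)$, which is exactly why $\rsize$ was chosen to be $\tfrac23\lg m$ rather than, say, $\lg m$.

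Finally I would conclude: the number of states of any agent of $\plog$ is at most the maximum of the six group counts computed above, each of which is $O(\log n)$; therefore $\plog$ uses $O(\log n)$ states per agent. I expect no genuine difficulty here — the statement is a direct consequence of the definitions in Table~\ref{tbl:variables} together with $m = \Theta(\log n)$ — so the proof is a short explicit enumeration rather than an argument with a hard step.
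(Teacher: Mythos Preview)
Your proposal is correct and follows essentially the same approach as the paper: enumerate the (constantly many) groups, note that the common variables contribute only a constant factor, and bound the product of additional-variable domain sizes by $O(\log n)$ in each group, with the only nontrivial case being $\vcand \cap (V_2\cup V_3)$ where $2^{\rsize}\cdot\rsize = O(m^{2/3}\log m) \subset O(\log n)$. The paper's own argument is in fact the paragraph immediately preceding the lemma statement, and your write-up is just a more explicit version of that same computation.
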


\begin{algorithm}[t]
\caption{$\plog$}
\label{al:pll}
%\textbf{Constants}:
%\begin{algorithmic}
% \STATE $m$
% \COMMENT{Given integer parameters satisfying $2^m \ge n$ and
%$2^m=O(\mathit{poly}(n))$}
% % \STATE $c_1,c_2$
% \STATE $\lmax = 6 m$, $\cmax = 12 m$, $\rsize = \lceil (\log_2 m )/2 \rceil$
% % \STATE $\tmax = c_1 \log_2 N$o
% % \STATE $\tint = c_4 \log_2 N$
%\end{algorithmic}
%\textbf{Variables of all agents}:
%\begin{algorithmic}
% \STATE $\leader,\tick \in \{\fl,\tr\}$
% \COMMENT{Initially, $v.\leader = \tr$ and $v.\tick = \fl$
%for all $v \in V$} 
% \STATE $\status \in \{\initial,\cand,\timer\}$
% \COMMENT{Initially, $v.\status = \initial$ for all $v \in V$}
% \STATE $\epoch,\init \in \{0,1,2,3\}$ 
% \COMMENT{Initially, $v.\epoch = v.\init = 0$ for all $v \in V$}
% \STATE $\clr \in \{0,1,2\}$
% \COMMENT{Initially, $v.\clr= 0$ for all $v \in V$}
%\end{algorithmic}
\textbf{Notations}:
\begin{algorithmic}
\STATE $\lmax = 5 m$, $\cmax = 41 m$, $\rsize = \lceil \frac{2}{3}\lg m \rceil$
\STATE $V_Z=\{v \in V \mid v.\status = Z\}$
for $Z \in \{\initial,\cand,\timer\}$
\STATE $V_i=\{v \in V \mid v.\epoch = i\}$
for $i \in \{1,\dots,4\}$
\end{algorithmic}
%\textbf{Variables of agents in $V_{\timer}$}:
%\begin{algorithmic}
%\STATE $\cnt \in \{0,1,\dots,\cmax\}$
%\end{algorithmic}
%\textbf{Variables of agents in $V_{\cand}\cap V_1$}:
%\begin{algorithmic}
%\STATE $\levelq \in \{0,1,\dots,\lmax\}$
%\STATE $\done \in \{\fl,\tr\}$
%\end{algorithmic}
%\textbf{Variables of agents in $V_{\cand}\cap (V_2 \cup V_3)$}:
%\begin{algorithmic}
%\STATE $\rand \in \{0,1,\dots,2^{\rsize}-1\}$
%\STATE $\ind \in \{0,1,\dots,\rsize-1\}$
%\end{algorithmic}
%\textbf{Variables of agents in $V_{\cand}\cap V_4$}:
%\begin{algorithmic}
%\STATE $\levelb \in \{0,1,\dots,\lmax\}$
%\end{algorithmic}
\textbf{Output function} $\outputs$:
\begin{algorithmic}
\STATE if $v.\leader = \tr$ holds, then the output of agent $v$ is $L$, otherwise $F$.
\end{algorithmic}
\textbf{Interaction} between initiator $a_0$ and responder $a_1$:
\hspace{-1cm} \verb| |
\begin{algorithmic}[1]  
 \IF{$a_0,a_1 \in \vinitial$} 
 \STATE $(a_0.\status,a_0.\levelq,a_0.\done,a_0.\leader) \gets (\cand,0,\fl,\tr)$
 \STATE $(a_1.\status,a_1.\cnt,a_1.\leader)\gets(\timer,0,\fl)$
 \ELSIF{$\exists i \in \{0,1\}:a_i \in \vinitial \wedge a_{1-i} \notin \vinitial$}
 \STATE $(a_i.\status,a_i.\levelq,a_i.\done,a_i.\leader) \gets (\cand,0,\tr,\fl)$
 \ENDIF
 \vspace{0.3cm}
 \STATE $a_0.\tick \gets a_1.\tick \gets \fl$
 \STATE $\countup()$
 %\COMMENT{Set $a_0.\tick$ and $a_1.\tick$}
 \STATE \foralldo{$i\in \{0,1\}$ such that $a_i.\tick$}{$a_i.\epoch = \max(a_i.\epoch+1,4)$}
 \STATE $a_0.\epoch \gets a_1.\epoch \gets \max(a_0.\epoch,a_1.\epoch)$
\vspace{0.3cm}
 \FORALL[Initialize variables for each group]{$i \in \{0,1\}$ such that $a_i.\epoch > a_i.\init$}
% \STATE \ifthen{$a_i \in \vinitial$}{$(a_i.\status,a_i.\cnt,a_i.\leader)\gets(\timer,0,\fl)$}
% \STATE \ifthen{$a_i \in \vinitial$}{$a_i.\status \gets \cand$}
% \STATE \ifthen{$a_i \in \vcand \cap V_1$}{$(a_i.\levelq,a_i.\done)\gets(0,\fl)$}
 \STATE \ifthen{$a_i \in \vcand \cap (V_2 \cup V_3)$}{$(a_i.\rand,a_i.\ind)\gets(0,0)$}
 \STATE \ifthen{$a_i \in \vcand \cap V_4$}{$a_i.\levelb \gets 0$}
 \STATE $a_i.\init \gets a_i.\epoch$
 \ENDFOR 
\vspace{0.3cm}
 \IF{$a_0,a_1 \in V_1$}
 \STATE Execute $\quick()$
 \ELSIF{$a_0,a_1 \in V_2 \vee a_0,a_1 \in V_3$}
 \STATE Execute $\tourn()$
 \ELSIF{$a_0,a_1 \in V_4$}
 \STATE Execute $\backup()$
 \ENDIF
\end{algorithmic}
\end{algorithm}

Independently of the six groups defined above,
we define another groups $\vl$ and $\vf$
based on a common variable $\leader$;
$\vl$ (resp., $\vf$) is the set of agents $v \in V$
such that $v.\leader=\tr$ (resp., $v.\leader = \fl$).
We introduce these two groups only for simplicity of notation.

The pseudo code of $\plog$ is given in Algorithm \ref{al:pll}
and its modules $\countup()$, $\quick()$, $\tourn()$, and $\backup()$ are presented in Algorithm \ref{al:countup}, \ref{al:quick}, \ref{al:tourn}, and \ref{al:backup}, respectively.
The main function of $\plog$ (Algorithm \ref{al:pll})
consists of four parts.
The first part (Lines 1-6) assigns status $\cand$ or $\timer$
to each agent.
The second part (Lines 7-10) manages variable $\epoch$
using module $\countup()$.
Initially, $v.\epoch = 1$ holds, that is, $v \in V_1$
holds for all $v \in V$.
In an execution of $\plog$,
$v.\epoch$ never decreases and increases by one
every sufficiently large logarithmic parallel time in expectation
until it reaches $4$ as we will explain later.
In the third part (Lines 11-15),
we initialize additional variables
when an agent increases its epoch. %and switches from one group to another.
Each agent $v$ has a common variable $\init$,
which is set to $1$ initially.
Whenever $v.\epoch$ increases,
$v.\epoch > v.\init$ must hold,
then $v$ initialize additional variables
according to $v$'s group and executes $v.\init \gets v.\epoch$.
For example, when the $\epoch$ of agent $v \in \vcand$
changes from $3$ to $4$
\ie $v$ moves from group $\vcand \cap V_3$ to $\vcand \cap V_4$,
it initializes an additional variable $a_i.\levelb$
%which are managed by agents in group $\vcand \cap V_4$,
to $0$ (Line 13).
Additional variables for groups $\vtimer$ and $\vcand \cap V_1$
are initialized not in this part but in the first part as we will explain
in Section \ref{sec:status}.
%One may think that such lines for initilization are reudundant
%in the psuedo code,
%but we explicitly include these initialization in the psuedo code
%because additional variables in different groups
%share the same area in the memory of each agent.
In the fourth part (Lines 16-22),
agents execute modules based on the values of
their $\epoch$.
Specifically,
agents execute $\quick()$, $\tourn()$, and $\backup()$
while they are in $V_1$, $V_2\cup V_3$, and $V_4$ respectively.

%At an interaction between agents $u$ and $v$,
%neither $u$ or $v$ can compute $V_X$ (or $V_i$),
%but both of them can determine whether each of the two agents
%belongs to $V_X$ and/or $V_i$.

In the remainder of this subsection,
we explain how $\plog$ assigns status to agents,
$\plog$ synchronizes the population by $\countup()$, 
and the implementation of the three modules
$\quick()$, $\tourn()$, and $\backup()$.

\subsubsection{Assignment of Status}
\label{sec:status}
At the beginning of an execution,
all agents are in $\vinitial$, that is,
the statuses of all agents are the ``initial'' status $\initial$.
Every agent is given status $\cand$ or $\timer$
at its first interaction where $\cand$ means ``leader candidate''
and $\timer$ means ``timer agent''.
As we will explain later,
the unique leader is elected from $\vcand$
%leader election with module $\quick()$, $\tournament()$,
%and $backup()$ are executed mainly 
and agents in $\vtimer$ are mainly used to synchronize the population
with their count-up timers.

Agents determine their status, $\cand$ or $\timer$,
by the following simple way.
When two agents in $\vinitial$ meet,
the initiator and the responder are given status $\cand$
and $\timer$, respectively (Line 2-3).
The initiator initializes its additional variable
$\levelq$ and $\done$ to $0$ and $\fl$ respectively
and remains a leader (Line 2)
while the responder
initializes its additional variable $\cnt$ to 0
and becomes a follower by $\leader \gets \fl$ (Line 3).
When an agent in $\vinitial$ meets an agent in $\vcand$ or $\vtimer$,
it gets status $\cand$ but it becomes a follower.
It also initialize its additional variable 
$\levelq$ and $\done$ to $0$ and $\tr$ respectively (Line 5).
For agent $v$, assigning $\tr$ to $v.\done$ means that
$v$ never joins a game with coin flips in $\quick()$.

No agent changes its status
once it gets status $\cand$ or $\timer$,
and no follower becomes a leader
in an execution of $\plog$.
Therefore, we have the following lemma.

%If an agent still has the initial status when
%its $\epoch$ reaches $1$ or larger value,
%it gets status $\cand$ (Line 6)
%and initialize $v.\levelq$ and $\done$ to $0$ and $\fl$
%(Line 7).
%After every agent gets status $\cand$ or $\timer$,
%$|\vcand| \ge n/2$ holds trivially.
\begin{lemma}
\label{lemma:status} 
In an execution of $\plog$,
$|\vcand| \ge n/2$, $|\vf| \ge n/2$, and $|\vtimer|\ge 1$ always hold
after every agent %reach the first epoch,
%\ie when $v.\epoch \ge 1$ holds for all $v \in V$.
gets status $\cand$ or $\timer$.
\end{lemma}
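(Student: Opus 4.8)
The plan is to argue directly from the $\status$- and $\leader$-update rules in Algorithm~\ref{al:pll}, using the two facts already established just above the lemma: an agent never changes its status once it has left $\vinitial$, and a follower never becomes a leader. First I would record that the only place $\status$ changes is Lines~2, 3, and 5. Lines~2--3 fire when both interacting agents are in $\vinitial$: they move the initiator to $\cand$ (keeping $\leader=\tr$) and the responder to $\timer$ (setting $\leader=\fl$); call this an \emph{R1 step}. Line~5 fires when exactly one of the pair is in $\vinitial$: it moves that agent to $\cand$ and sets $\leader=\fl$; call this an \emph{R2 step}. In either case the agents that leave $\vinitial$ do so permanently, so along any execution $\vcand$ and $\vtimer$ are nondecreasing and, together, they contain exactly the agents that have received a status.

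Next I would dispatch $|\vtimer|\ge 1$: since $\cinit{\plog}$ places every agent in $\vinitial$, the first interaction of any execution is an R1 step and creates a $\timer$ agent, so $|\vtimer|\ge 1$ holds in every configuration reached after at least one interaction, in particular in every reachable configuration in which all agents have a status. For $|\vcand|\ge n/2$ I would run a conservation count. Fix a reachable configuration $C$ in which every agent has left $\vinitial$ and let $k=|\vtimer(C)|$. A $\timer$ agent is produced only by an R1 step, one per step, and is never lost, so $k$ is exactly the number of R1 steps on the way to $C$; since every R1 step removes two distinct agents from $\vinitial$ and every R2 step removes one, and in $C$ all $n$ agents have been removed, we get $2k\le n$, i.e.\ $k\le n/2$. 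As $\vcand$ and $\vtimer$ partition the agents of $C$, $|\vcand(C)|=n-k\ge n/2$.

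Finally, for $|\vf|\ge n/2$: every agent starts with $\leader=\tr$, and an agent's $\leader$ is flipped to $\fl$ exactly when it leaves $\vinitial$ as the responder of an R1 step, when it leaves $\vinitial$ via an R2 step, or when a module later demotes it; only the $k$ initiators of R1 steps can still have $\leader=\tr$ at the moment they receive a status. Hence at most $k\le n/2$ agents are leaders in $C$, so $|\vf(C)|=n-|\vl(C)|\ge n/2$. The whole argument is bookkeeping rather than anything deep; the only points needing care are (i) that the three inequalities must persist in \emph{every} configuration after all agents get a status, which follows from monotonicity of $\vcand$ and $\vtimer$ (``status never changes'') and of $\vf$ (``followers never revert''), and (ii) arranging the R1/R2 split so that the single bound $k\le n/2$ yields all three claims simultaneously. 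I do not anticipate a genuine obstacle.
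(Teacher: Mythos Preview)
Your argument is correct and is essentially the same as the paper's: the paper lets $x,y,z$ count the agents assigned status via Lines~2, 3, and 5 respectively, notes $x=y\le n/2$, and derives all three bounds from that, which is exactly your R1/R2 bookkeeping with $k=y$. Your write-up is slightly more explicit about persistence (via the monotonicity of $\vcand$, $\vtimer$, and $\vf$), but there is no real difference in approach.
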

\begin{proof}
Consider any configuration in $\call(\plog)$
where every agent has status $\cand$ or $\timer$.
Let $x$ (resp., $y$ and $z$) be the the number of agents which get status $\cand$ (resp., $\timer$ and $\cand$) by Line 2
(resp., Line 3 and Line 5).
We have $x=y \le n/2$ by the definition of $\plog$,
which gives $|\vcand| = x + z = n-y \ge n/2$.
Moreover, $|\vl| \le x \le n/2$ holds
because the number of leaders is monotonically non-increasing
in an execution of $\plog$.
The first interaction of the execution assigns
one agent with status $\vtimer$, hence
$|\vtimer|\ge 1$ holds.
%
%
%
%One agent gets status $\cand$ whenever
%an agent gets status $\timer$ (Lines 2-3),
%which guarantees $|\vcand| \ge |\vtimer|$,
%thus $|\vcand| \ge n/2$.
%%Similarly, one agent gets status $\timer$ whenever
%%an agent in $\vinitial$ joins a group in $\vl \cap \vcand$
%%(Line 2-3), which guarantees $|\vl|$
\end{proof}

%and we will prove that $|\vcand| \ge n/3$
%holds with high probability in Section \ref{sec:analysis}.
%As we will see in Section \ref{sec:analysis},
%after every agent gets status $\cand$ or $\timer$,
%no less than one third of the population has status $\cand$
%all other agents of the population 
%Every agent are given status $\cand$ or $\timer$
%until its $\epoch$.

\begin{algorithm}[t]
\caption{$\countup()$}
\label{al:countup}
\textbf{Interaction} between initiator $a_0$ and responder $a_1$:
\hspace{-1cm} \verb| |
\begin{algorithmic}[1]
 \setcounter{ALC@line}{22}  
%\STATE \foralldo{$i \in \{0,1\}$}{\ifthen{$a_i.\status=\timer$}{$a_i.\cnt = a_i.\cnt + 1$}}
\FORALL{$i \in \{0,1\}$ such that $a_i \in \vtimer$}
\STATE $a_i.\cnt \gets a_i.\cnt + 1 \pmod{\cmax}$
\IF{$a_i.\cnt=0$}
\STATE $a_i.\clr \gets a_i.\clr + 1 \pmod{3}$
\STATE $a_i.\tick \gets \tr$
\ENDIF
\ENDFOR
\IF{$\exists i \in \{0,1\}: a_{1-i}.\clr = a_i.\clr+1 \pmod{3}$}
\STATE $a_i.\clr \gets a_{1-i}.\clr$
\STATE $a_i.\tick \gets \tr$
\STATE \ifthen{$a_i \in \vtimer$}{$a_i.\cnt \gets 0$}
\ENDIF
\end{algorithmic}
\end{algorithm}

\subsubsection{Synchronization and Epochs}
\label{sec:sync}
When a unique leader exists in the population, 
we can synchronize the population by Phase clocks
with constant space per agent \cite{fast}.
Recently, in \cite{GS18} and \cite{GSU18},
it is proven that
even when we cannot assume the existence of the unique leader,
Phase clocks can be used for synchronization
if we are allowed to use $O(\log \log n)$ states per agent.
%However, this synchronization with $O(\log \log n)$ states
%without the unique leader involves complicated analysis.
Since we use $O(\log n)$ states for another modules, 
we achieve synchronization in simpler way with $O(\log n)$ states per agent.

%Our goal of synchronization is to raise a $\tick$ flag
%in every $O()$

%Agents in group $\vtimer$ plays a principal role
For synchronization,
we use common variables
$\clr \in \{0,1,2\}$ in all agents
and an additional variable
$\cnt \in \{0,1,\dots,\cmax-1\}$ for agents in group $\vtimer$.
Initially, all agents have the same color, namely, $0$.
%The color of each agent usually 
%Each agent changes its color not so often,
The color of an agent is incremented by modulo $3$
when the agent changes its color.
We say that the agent \emph{gets a new color}
when this event happens.
%Our goal is to keep the colors of
%all agents the same in most of the time
%and increment them by one in every $\Theta(n)$ parallel time.
Roughly speaking, our goal is to guarantee that
\begin{itemize}
 \item (i) whenever one agent gets a new color (\eg changes its $\clr$ from $0$ to $1$), the new color spreads to the whole population
within $O(\log n)$ parallel time with high probability,
 \item (ii) thereafter, all agents keeps the same color
for sufficiently long but $\Theta(\log n)$ parallel time with high probability.
% \item (iii) thereafter, at least one agent gets a new color within
%$O(\log n)$ parallel time with high probability.
\end{itemize}
Specifically, ``sufficiently long but $\Theta(\log n)$
time'' in (ii)
means %$9\lceil 2 \ln n \rceil$ parallel time,
sufficiently long period 
%during which
such that
%an execution of modules $\quick()$, $\tourn()$,
%and $\backup()$ can decrease the number of leaders
%as we anticipate.
any $O(\log n)$ parallel time operations
in $\quick()$, $\tourn()$, and $\backup()$,
such as one-way epidemic of some value,
finishes with high probability during the period.
%Every time two agents $u$ and $v$ has an interaction,
%both $u$ and $v$ drop their flag $u.\tick$ and $v.\tick$
%(by $u.\tick \gets v.\tick \gets \fl$),
%and module $\countup()$ is then invoked (Line 1-2).

At every interaction, module $\countup()$ is invoked (Line 8)
and 
variables $\clr$ and $\cnt$ can be changed only in this module.
In $\countup()$, every agent in $\vtimer$ increments
its $\cnt$ by one modulo $\cmax$ (Line 24).
For every $v \in \vtimer$,
%every time $v.\cnt$ reaches zero
%except when $v$ changes its status from $\initial$ to $\timer$,
if this incrementation changes $v.\cnt$ from $\cmax-1$ to $0$,
$v$ gets a new color by incrementing $v.\clr$ by one modulo $3$
(Line 26).
Once one agent gets a new color,
the new color spreads to the whole population
%in a logarithmic parallel time
via one-way epidemic in the whole population. 
Specifically, if agents $u$ and $v$ satisfying $u.\clr=v.\clr+1 \pmod{3}$ meets, $v$ execute $v.\clr \gets u.\clr$ and resets
its $\cnt$ to $0$ (Line 31-33).
%We will see in Section \ref{sec:analysis}
%that this simple synchronization guarantees
%the above requirements.

%that,
%the color of all agents are the same
%in most of the time in an execution
%and once some agent gets a new color, the new color is propagatedto the whole popualtion in a logarithmic parallel time.
%and are incremented in every $\Theta(\log n)$ parallel time.

Every time an agent $v$ gets a new color,
it raise a tick flag, i.e., assigns $v.\tick \gets \tr$
 (Lines 27 and 32).
This common variable $v.\tick$
is used only for simplicity of the pseudo code
and it does not affect the transition
at $v$'s next interaction ($v.\tick$ is reset to $\fl$ in Line 7), unlike any other variable.
When $v.\tick$ is raised,
$v.\epoch$ increases by one unless it has already reached $4$
(Line 9).
After two agents $u$ and $v$
execute Lines 7-9 at an interaction,
$u.\epoch = v.\epoch$ usually holds.
However, this equation does not hold
when synchronization fails.
For this case, we substitute $\max(u.\epoch,v.\epoch)$
into $u.\epoch$ and $v.\epoch$ in Line 10.

As mentioned above, every agent gets a new color
in every sufficiently large $\Theta(\log n)$ parallel time
with high probability.
This means that, for every $v \in V$,
$v.\tick$ is raised and $v.\epoch$ increases
by one with high probability in
every sufficiently large $\Theta(\log n)$ parallel time
until $v.\epoch$ reaches $4$.
If this synchronization fails,
\eg some agent gets a color $1$ without keeping color $0$
for $\Theta(\log n)$ parallel time,
the modules $\quick()$ and $\tourn()$
may not work correctly.
However,
starting from any configuration
after a synchronization fails arbitrarily,
module $\countup()$ and Lines 7-10 guarantees that
all agents proceeds to the forth epoch
within $O(\log n)$ parallel time in expectation,
and thereafter
$\backup()$ guarantees that
exactly one leader is elected within $O(n)$ parallel time
in expectation.
%as we will prove in Section \ref{sec:analysis}.
Hence, $\plog$ guarantees that
a unique leader is elected with probability $1$.
The above $O(n)$ parallel time
never prevent us from achieving stabilization
time of $O(\log n)$ parallel time in expectation because
synchronization fails with probability
at most $O(\log n/n)$ 
as we will see later.
%in Section \ref{sec:analysis}.

%It is always reset to false before invoking $\countup()$
%(Line 1).

\begin{definition}
For any $i = 0,1,2$, we define $\ccolor(i)$
as the set of all configurations in $\call(\plog)$
where every agent has color $i$.
\end{definition}
\begin{definition}
For any $i = 0,1,2$, we define $\cstart(i)$
as the set of all configurations in $\call(\plog)$
each of which satisfies all of the following conditions:
\begin{itemize}
 \item some agent has color $i$,
 \item $v.\cnt=0$ holds for all $v \in \vtimer$ such that
       $v.\clr = i$, and
 \item no agent has color $i+1 \pmod{3}$.
\end{itemize}
\end{definition}
\begin{lemma}
\label{lemma:timer}
In an execution of $\Xi(\cinit{\plog},\Gamma)$,
each agent in $\vtimer$ always
gets a new color within $O(\log n)$ parallel time
with high probability.
\end{lemma}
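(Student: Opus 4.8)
The plan is to argue that each timer agent wraps its own counter $\cnt$ within $O(\log n)$ parallel time, and that by construction of $\countup()$ such a wrap is precisely one of the ways an agent gets a new color; thus, even without appealing to the epidemic spreading of colors, each timer agent gets a new color fast. First I would fix a timer agent $v$ and an arbitrary step $t$ at which $v.\status = \timer$ (recall a status, once assigned, is permanent), and consider the window $W$ consisting of the next $T = c\given n$ steps for a suitably large constant $c$, that is, $c\given = O(\log n)$ parallel time. The structural observation, read directly off $\countup()$: during $W$ the value $v.\cnt$ is increased by $1 \pmod{\cmax}$ exactly once each time $v$ participates in an interaction (Line~24), and it is reset to $0$ only in Lines~31--33, which fire only when $v$ copies a strictly-larger-mod-$3$ color from its partner --- i.e.\ only when $v$ itself gets a new color. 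Hence, if $v$ gets no new color anywhere in $W$, then inside $W$ the counter $v.\cnt$ advances monotonically (mod $\cmax$) with exactly one step per participation of $v$ and is never reset; so if $v$ participates in at least $\cmax$ interactions during $W$, the counter must pass from $\cmax-1$ to $0$ at some such interaction, and at that point $v$ gets a new color via Line~26 --- contradicting the assumption. It therefore suffices to show that, with the required probability, $v$ participates in at least $\cmax$ interactions during $W$.

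Next I would run a routine Chernoff estimate. Let $N$ be the number of steps $s \in W$ with $v \in \Gamma_s$. Since the $\Gamma_s$ are mutually independent and $\Pr(v \in \Gamma_s) = 2(n-1)/(n(n-1)) = 2/n$, the variable $N$ is a sum of $|W| = T$ independent Poisson trials with $\ex[N] = 2T/n = 2c\given$. Taking $c \ge 41$ so that $\ex[N] \ge 2\cmax$, the lower-tail bound~\eqref{eq:lowerhalf} with $\delta = 1/2$ gives $\Pr(N < \cmax) \le \Pr(N \le \ex[N]/2) \le e^{-\ex[N]/8} \le e^{-\cmax/4}$, and since $\cmax = 41\given \ge 41 \log_2 n$ this is $n^{-\Omega(1)}$ with a comfortably large exponent; indeed the exponent can be made any prescribed constant by enlarging $c$ while keeping $T/n = O(\log n)$.

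Finally, to upgrade this to the statement for every timer agent and ``always'', I would take the exponent above to be some large constant $c'$ and union-bound the failure probability $n^{-c'}$ over the at most $n$ agents that ever have status $\timer$ and over the (polynomially many) steps $t$ within the time horizon that matters for the overall analysis of $\plog$; within that horizon the window bound then holds simultaneously for all of them with high probability. The point needing care is the reading of ``always $\ldots$ with high probability'': a union bound over all $t \ge 0$ would diverge, so one restricts to the horizon up to which $\plog$ is (elsewhere) shown to stabilize with high probability. Apart from that bookkeeping the proof is a direct counting-plus-Chernoff argument, so I do not expect a serious obstacle; the only mildly delicate step is the first --- reading off $\countup()$ that ``$v$ gets no new color during $W$'' forces the counter to advance cleanly, so that $\cmax$ participations of $v$ necessarily trigger a wrap.
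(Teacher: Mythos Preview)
Your argument is correct and follows essentially the same approach as the paper's proof, which is the one-line observation that any agent participates with probability $2/n$ per step and a timer agent gets a new color before it has $\cmax$ interactions, so a simple Chernoff bound suffices. Your version simply spells out in detail the structural fact the paper states (that absent a color change the counter advances cleanly and hence $\cmax$ participations force a wrap) and makes explicit the bookkeeping about the time horizon for the ``always'' quantifier, which the paper leaves implicit.
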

\begin{proof}
Simple Chernoff bound gives the lemma
because
any agent has an interaction with
probability $2/n$ at each step
and each agent in $\vtimer$ gets a new color before
it has $\cmax$ interactions.
\end{proof}
%Note that $\cinit \in \ccolor(0) \cap \cstart(0)$.
The goal of our synchronization, (i) and (ii),
are formalized as follows.
\begin{lemma}
\label{lemma:sync}
Let $i \in \{0,1,2\}$, $C_0 \in \cstart(i)$,
and $\Xi_{\plog}(C_0,\Gamma)=C_0,C_1,\dots$.
Then, all of the following propositions hold.
\begin{itemize}
 \item $P_1$: No agent gets color $i+1 \pmod{3}$
by the first $\unit$ steps
in $\Xi_{\plog}(C,\Gamma)$ 
with high probability.
 \item $P_2$: Execution $\Xi_{\plog}(C_0,\Gamma)$
reaches a configuration in $\ccolor(i)$
by the first $\lfloor 4n \ln n \rfloor$ steps
with high probability.
 \item $P_3$: Execution $\Xi_{\plog}(C_0,\Gamma)$
reaches a configuration in $\cstart(i+1 \pmod{3})$
within $O(\log n)$ parallel time
with high probability.
\end{itemize}
\end{lemma}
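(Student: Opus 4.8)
The plan is to prove the three propositions essentially independently, each by reduction to a Chernoff-type estimate on the number of interactions a fixed agent participates in, combined with Lemma~\ref{lemma:epidemic} for the epidemic-spreading statements. Throughout, fix $i\in\{0,1,2\}$ and $C_0\in\cstart(i)$. The central observation is that, starting from $C_0$, no agent can get color $i+1\pmod 3$ until \emph{some} timer agent $v\in\vtimer$ with $v.\clr=i$ runs its counter $\cnt$ from $0$ up to $\cmax-1$ and wraps around; by the definition of $\cstart(i)$ every such $v$ starts with $v.\cnt=0$, and a timer agent's counter is reset to $0$ whenever it adopts color $i$ from a neighbor (Line~33), so between the start and the first appearance of color $i+1$, each relevant timer agent must personally participate in at least $\cmax=41m\ge 41\lg n$ interactions \emph{after its last reset}. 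Actually, to be careful, the cleanest route is: the first agent to get color $i+1$ must be a timer agent $w$ whose $\cnt$ reached $\cmax$; trace back to the last time $w.\cnt$ was $0$ (either step $0$, or a reset in Line~33, or a wrap in Line~25 — but a wrap would itself have produced color $i$ earlier, contradicting minimality in the relevant way), and conclude $w$ had at least $\cmax$ interactions in the window.

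For $P_1$, I would bound the probability that some fixed timer agent participates in $\cmax=41m$ interactions within $\unit=\lfloor 21n\ln n\rfloor$ steps. Each step, a fixed agent participates with probability $2/n$, so the expected number of its interactions in $\unit$ steps is about $42\ln n$; since $41m\le 41\cdot(42\ln n/41)$... more precisely, with $m=\Theta(\log n)$ and $m\ge\lg n$, we have $\cmax=41m$, and $41m < (1-\delta)\cdot 42\ln n$ is \emph{not} automatic, so I would instead choose the constant $21$ in $\unit$ precisely so that $\ex[\text{\#interactions}]=2\unit/n\approx 42\ln n$ comfortably exceeds $\cmax=41m$ when $m$ is close to $\lg n$ — wait, $42\ln n = 42\ln n$ versus $41m \ge 41\lg n = 41\ln n/\ln 2 \approx 59.2\ln n$, so in fact the inequality goes the \emph{wrong} way for $P_1$ and the \emph{right} way only because $P_1$ asserts color $i+1$ does \emph{not} appear, i.e. we need each timer agent to have \emph{fewer} than $\cmax$ interactions, so we want $\cmax$ \emph{large} relative to $2\unit/n$: indeed $\cmax=41m\ge 41\lg n\approx 59.2\ln n > 42\ln n = 2\unit/n$, so by the upper-tail Chernoff bound \eqref{eq:upperdouble} with $\delta$ bounded away from $0$, a fixed timer agent has $\ge\cmax$ interactions in $\unit$ steps with probability $e^{-\Omega(\log n)}\le n^{-2}$ for an appropriate choice of the constants $41$ and $21$; a union bound over at most $n$ timer agents gives $P_1$ with high probability.

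For $P_2$ and $P_3$, I would use Lemma~\ref{lemma:epidemic}. Since $C_0\in\cstart(i)$ has at least one agent with color $i$ and none with color $i+1\pmod 3$, the color-$i$ value spreads by one-way epidemic in the whole population $V$ (here $V'=V$, $n'=n$): by Lemma~\ref{lemma:epidemic} with $t=\Theta(n\ln n)$, all agents hold color $i$ within $O(n\log n)$ steps with probability $\ge 1-ne^{-t/n}\ge 1-n^{-1}$; choosing $t$ so that $2\lceil n/n\rceil t = 2t\le \lfloor 4n\ln n\rfloor$, i.e. $t=\lfloor 2n\ln n\rfloor$, gives $ne^{-t/n}\le ne^{-2\ln n+O(1/n)}=O(n^{-1})$, establishing $P_2$. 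Conditioning on $P_2$ and then on $P_1$ (so that no color $i+1$ appears during the relevant window and every agent has color $i$ by step $\lfloor 4n\ln n\rfloor$), every timer agent with color $i$ must, before producing color $i+1$, run its counter the full $\cmax$ interactions starting from $\cnt=0$; a symmetric lower-tail argument (Chernoff bound \eqref{eq:lowerhalf}) shows that within $O(n\log n)$ further steps \emph{some} timer agent has $\ge\cmax$ interactions with high probability — using $|\vtimer|\ge 1$ from Lemma~\ref{lemma:status} — hence some agent gets color $i+1$; and the first such event, by the definition of Line~25 and Line~33, leaves all color-$i$ timer agents with $\cnt=0$ and (by $P_1$, applied with the window still open) no agent yet at color $i+2$, which is exactly the defining condition of $\cstart(i+1\pmod 3)$, after possibly one more $O(\log n)$-time epidemic cleanup; this gives $P_3$.

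The main obstacle I expect is $P_3$: one must argue that when the \emph{first} agent reaches color $i+1$, the configuration genuinely lies in $\cstart(i+1\pmod 3)$ — in particular that \emph{every} timer agent of color $i$ still has $\cnt=0$ at that moment. This is delicate because a timer agent's $\cnt$ is only reset to $0$ upon adopting color $i$ from a neighbor (Line~33), not while it sits at color $i$ counting up; so between adopting color $i$ (or step $0$) and the first appearance of color $i+1$, such an agent may well have $\cnt>0$. The resolution is that the defining condition of $\cstart(i+1)$ concerns timer agents of color $i+1$, not $i$: once color $i+1$ first appears and begins its own epidemic, every timer agent that adopts color $i+1$ resets to $\cnt=0$ (Line~33), and the few timer agents that reach color $i+1$ by their own wraparound do so exactly as $\cnt$ returns to $0$ (Line~25); so I would take the target configuration of $P_3$ to be the one at the end of the color-$(i+1)$ epidemic, not the one at its start, and invoke Lemma~\ref{lemma:epidemic} once more (together with a repeat of the $P_1$-style bound to ensure color $i+2$ has not yet appeared) to close the argument. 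Keeping the three overlapping high-probability events and their time windows consistent — and pinning down the constants $21$, $41$, $9$ so that all the Chernoff exponents exceed $\ln n$ — is the bookkeeping that makes this lemma the technical heart of the synchronization analysis.
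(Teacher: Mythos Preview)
Your proposal is correct and follows essentially the same route as the paper: $P_1$ via an upper-tail Chernoff bound on the number of interactions of a single timer agent in $\unit$ steps (the paper carries out exactly the computation you sketch, using $\cmax\ge 41\lg n \ge 58\ln n$ versus $\ex[X]\le 42\ln n$), $P_2$ via Lemma~\ref{lemma:epidemic} with $V'=V$, and $P_3$ via Lemma~\ref{lemma:timer}. The paper is in fact terser than you on $P_3$ (one line), and the subtlety you flag --- that one must check the $\cstart(i+1)$ conditions, in particular that no agent still carries color $i+2\equiv i-1$ --- is real and is handled implicitly by $P_2$; your proposed extra step of waiting for the color-$(i{+}1)$ epidemic to complete is harmless but unnecessary, since the configuration at the \emph{first} appearance of color $i+1$ already lies in $\cstart(i+1)$ (every timer agent that has color $i+1$ at that instant just acquired it with $\cnt=0$, and, conditioned on $P_2$, all color-$(i-1)$ agents are gone).
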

\begin{proof}
Propositions $P_2$ and $P_3$ immediately follows from
Lemma \ref{lemma:epidemic} with $n'=n$
and Lemma \ref{lemma:timer}, respectively.
In the following,
we prove proposition $P_1$.
Starting from
a configuration $C_0 \in \cstart(i)$,
no agent gets color $i+1 \pmod{3}$
until some agent in $\vtimer$ participates in
no less than $\cmax$ interactions.
For any agent $v$,
$v$ participates in an interaction
with probability $2/n$ at every step.
Therefore, letting $X$ be
a binomial random variable such that
$X \sim B(\unit,2/n)$,
$v$ participates in no less than
$\cmax$ interactions with probability
$\Pr(X \ge \cmax)$, which is bounded as follows.
\begin{align*}
\Pr(X \ge \cmax)
&= \Pr\left(X \ge \frac{\cmax}{42 \ln n}\ex[X]\right)\\
&\le \Pr\left(X \ge \frac{58}{42}\ex[X]\right)\\
&\le
\exp\left(-\frac{(58-42)^2}{42^2\cdot 3} \ex[X]\right)\\
&\le
\exp(-2 \ln n+0.05)\\
&=O\left(n^{-2}\right).
\end{align*}
where we use $\cmax \ge 41 \lg n \ge 58 \ln n$
for the second inequality
and Chernoff Bound in the form of
\eqref{eq:upperdouble} in Lemma \ref{lemma:chernoff}
for the third inequality.
Thus, the union bound gives that
no agent gets color $i+1 \pmod{3}$ by the first
$\unit$ interactions
in $\Xi_{\plog}(C,\Gamma)$
with probability $1-O(n^{-1})$.
%Second, we prove proposition $P_2$.
%After reaching a configuration in $\ccolor(i)$,
%the execution %$\Xi_{P}(C_0,\Gamma)$
%necessariy reaches a configuration in
%$\cstart(i+1 \pmod{3})$
%before some agent in $\vtimer$ participates
%in more than $\cmax$ interactions.
%Since any agent has an interaction with
%probability $2/n$ at each step,
%simple Chernoff bound guarantees that
%a configuration in $\cstart(i+1 \pmod{3})$ 
%is reached within $O(n \log n)$ interactions
%with high probability, from which $P_3$ follows.
%%Finally, we prove proposition $P_3$.
\end{proof}
\begin{algorithm}[t]
\caption{$\quick()$}
\label{al:quick}
\textbf{Interaction} between initiator $a_0$ and responder $a_1$:
\hspace{-1cm} \verb| |
\begin{algorithmic}[1]  
 \setcounter{ALC@line}{34}
 \IF{$\exists i \in \{0,1\}$ such that $a_i \in \vl \wedge a_{1-i} \in \vf \wedge \lnot a_i.\done$}
% \IF{$i = 0$}
% \STATE $a_0.\levelq \gets \max(a_0.\levelq+1,\lmax)$
% \ENDIF
 \STATE
 \ifthen{$i=0$}{$a_0.\levelq \gets \max(a_0.\levelq+1,\lmax)$}
 \STATE \ifthen{$i=1$}{$a_1.\done \gets \tr$}
 \ENDIF
 \IF{$a_0,a_1 \in \vcand \wedge a_0.\done \wedge  a_1.\done  \wedge \exists i \in \{0,1\}: a_i.\levelq < a_{1-i}.\levelq$}
 \STATE $a_i.\leader \gets \fl$
 \STATE $a_i.\levelq \gets a_{1-i}.\levelq$
 \ENDIF
\end{algorithmic}
\end{algorithm}

\subsubsection{$\quick()$}
\label{sec:quick}
The module $\quick()$ uses additional variables
$\levelq \in \{0,1,\dots,\lmax-1\}$ and $\done \in \{\fl,\tr\}$ of group $\vcand \cap V_1$.
Each agent $v$ executes this module
only when $v.\epoch = 1$ holds.
%At the first time $v.\epoch=1$ holds,
%these additional variables are initilized by
%$v.\levelq \gets 0$ and $v.\done \gets \fl$ (Line 7).
As mentioned in section \ref{sec:status},
when an agent $v$ is assigned with status $\vcand$, it holds that $v$ is a leader and $v.\done=\fl$
or $v$ is a follower and $v.\done=\tr$.

In an execution of module $\quick()$,
each leader $v \in \vcand$ makes fair coin flips repeatedly until it sees ``tail'' for the first time
and stores on $v.\levelq$
the number of times it observes ``heads''.
Specifically, a leader with $v.\done = \fl$
makes a fair coin flip every time
it interacts with a follower (\ie an agent in $\vf$).
If the result is head
(\ie $v$ is an initiator at the interaction), it increments $\levelq$ by one (Line 36).
Otherwise, it stops coin flipping by assigning $v.\done \gets \tr$ (Line 37).
%Stopped agents compare their $\levelq$ 
The largest $\levelq$ among all agents in $\vl$
spreads to the whole sub-population $\vcand$ via one-way epidemic.
Specifically, when two stopped agents $u,v \in \vcand$ meet,
they update their $\levelq$ to $\max(u.\levelq,v.\levelq)$
(Line 41).
When an agent $v \in \vcand$ meets an agent with larger $\levelq$
than $v.\levelq$, it becomes a follower (Line 40).
The correctness of $\quick()$ is formalized as the following lemma.

\begin{lemma}
\label{lemma:quick}
Let $\Xi = \Xi_{\plog}(\cinit{\plog},\Gamma)=C_0,C_1,\dots$.
%It holds with high probabiilty that
%exeuction $\Xi$ reaches, within $O(\log n)$ parallel time,
%a configuration in $\cstart(1)$
%where $V_1 \cup V_2 = V$ holds.
%When $\Xi$ reaches such a configuration for the first time,
In a configuration $C_{\unit}$,
$\Pr(|\vl|=i) < 2^{1-i}+\epsilon_i$ holds for any $i=2,3,\dots,n$
where $\epsilon_i$ is a non-negative number
such that $\sum_{i=2}^{n} \epsilon_i = O(n^{-1})$.
\end{lemma}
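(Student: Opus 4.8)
The plan is to couple the execution of $\plog$ up to step $\unit$ with the idealized coin-flipping game described in Section~\ref{sec:keyquick}, and then invoke the combinatorial bound $\Pr(|\{v : s_v = \smax\}| = i) \le 2^{1-i}$ proved there. The error terms $\epsilon_i$ will absorb all the ways in which the real protocol can deviate from that clean picture. First I would identify the relevant ``bad events'' that make the coupling fail: (a) synchronization failure, i.e. some agent leaves epoch $1$ before the game is resolved, which by Lemma~\ref{lemma:sync} (proposition $P_1$, applied with $i=0$ and $C_0 = \cinit{\plog}$, noting $\cinit{\plog} \in \cstart(0)$) happens with probability $O(n^{-1})$; (b) some leader never sees a ``tail'' within $\unit$ steps, i.e. gets more than $\lmax = 5m \ge 5\lg n$ consecutive heads, which for a fixed agent has probability at most $n^{-5}$ and by the union bound $O(n^{-4})$ overall; (c) the one-way epidemic of $\smax$ within $\vcand$ fails to complete by step $\unit$, which by Lemma~\ref{lemma:epidemic} applied with $V' = \vcand$, $n' = |\vcand| \ge n/2$ (Lemma~\ref{lemma:status}), and an appropriate $t = \Theta(n\ln n)$ gives failure probability $ne^{-t/n} = O(n^{-1})$ for a suitable constant hidden in $\unit = \lfloor 21 n\ln n\rfloor$; and (d) the fact that the coin flips of distinct leaders are not genuinely independent in this simple simulation — but the paper defers a genuinely-independent implementation to later, so for this lemma I would instead argue that the relevant event (``exactly $i$ agents share the max level'') only depends, per agent, on the prefix of its interaction sequence up to its first responder-interaction, and these prefixes, while not independent across agents, still satisfy the needed inequality because the argument in Section~\ref{sec:keyquick} via $p_{i,j} = 1/(2^i-1)$ only used the law of each agent's own coin sequence together with a conditioning argument. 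I would make precise that $\Pr(k_i = j)$ and $p_{i,j}$ as defined there are well-defined for the actual scheduler.

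The core of the argument: condition on the complement of events (a)–(c). On this event, at step $\unit$ every leader $v$ with $v.\done = \fl$ initially has completed its coin flips and stored $s_v$ in $v.\levelq$; the agents with $v.\done = \tr$ (those assigned $\cand$ while meeting a non-initial agent) never participate and are already followers, so they do not affect $|\vl|$. Moreover the epidemic has propagated $\smax = \max_{v} s_v$ (the max over the ``playing'' leaders) to all of $\vcand$, so every leader $v$ with $s_v < \smax$ has executed $v.\leader \gets \fl$, and consequently $|\vl| = |\{v \in V_A \cap V_1 \cap V_L : s_v = \smax\}| = |\{v : s_v = \smax\}|$ among the playing set. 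Hence on the good event, $\{|\vl| = i\} \subseteq \{\,|\{v : s_v = \smax\}| = i\,\}$, and by the computation in Section~\ref{sec:keyquick} the latter has probability at most $2^{1-i}$. Therefore $\Pr(|\vl| = i) \le 2^{1-i} + \Pr(\text{bad event})$, and setting $\epsilon_i = \Pr(\text{bad event})$ uniformly (or distributing it) yields $\sum_{i=2}^n \epsilon_i = O(n^{-1})$ after noting that we only need to charge the bad-event probability to finitely many $i$ (or that $\sum_i 2^{1-i}$ already converges so a single $O(n^{-1})$ additive term suffices, formally $\epsilon_i = O(n^{-1})$ for $i=2$ and $\epsilon_i = 0$ for $i \ge 3$, or spread as $\epsilon_i = O(n^{-1}) 2^{-i}$).

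The main obstacle I anticipate is event (d): rigorously justifying that the non-independence of the simple coin-flip simulation does not break the bound $2^{1-i}$. In the idealized game the key identity $p_{i,j} = 2^{-i}(1 + p_{i,j+1})$ used that, conditioned on exactly $i$ agents having $j$ heads, each of those $i$ agents independently flips again. In the protocol, ``agent $v$'s $(j{+}1)$-st coin flip'' is ``whether $v$'s next interaction (after its first $j$ interactions) has $v$ as initiator or responder'', and conditioning on the set of agents still in play can skew these — for instance, if two in-play agents happen to interact with each other. I would handle this either (i) by observing that the analysis only needs, for the specific event in question, that $\Pr(\text{all } i \text{ in-play agents flip head next} \mid \cdot) \le 2^{-i}$ and $\Pr(\text{all } i \text{ flip tail next} \mid \cdot) \le 2^{-i}$, inequalities that do hold because each agent is initiator of its next step with probability $\le 1/n$ divided appropriately... but this is exactly where the delicacy lies, so (ii) the cleaner route, and the one I would take, is to point forward: the lemma is stated for $\plog$ as \emph{defined in Section~\ref{sec:detailed}}, whose actual coin-flip mechanism (via the $\timer$/epoch structure and the later independence gadget) \emph{does} realize genuinely independent fair coins for the in-play leaders, so that $\{|\vl| = i\}$ on the good event is literally the event in the idealized game and the bound $2^{1-i}$ applies verbatim; the $\epsilon_i$ then come purely from (a)–(c). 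I would phrase the proof to lean on that structural guarantee rather than re-deriving independence by hand.
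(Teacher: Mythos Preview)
Your overall architecture---couple to the idealized game of Section~\ref{sec:keyquick}, list the bad events under which the coupling fails, bound each by $O(n^{-1})$, and absorb them into $\epsilon_i$---is exactly the paper's approach. Your events (a), (b), (c) correspond to the paper's three conditions (all agents still in epoch~1; no $\levelq$ hits $\lmax$; every leader has $\done=\tr$ and the epidemic of $\smax$ has finished in $\vcand$).

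Where you go astray is event~(d). You treat the independence of different leaders' coin flips as a genuine obstacle and propose to resolve it by ``pointing forward'' to a later gadget. But there is no later gadget for this: the construction in Section~\ref{sec:symmetric} is for the \emph{symmetric} variant and is irrelevant here. The actual $\plog$ of Section~\ref{sec:detailed} already gives fully independent fair flips, for the following elementary reason that the paper states in one sentence: in $\quick()$ a coin is flipped only when a \emph{leader} meets a \emph{follower} (Lines~35--37). Hence at most one agent flips a coin at any given step, and the outcome (initiator vs.\ responder) is fair and independent of the entire history under the uniform scheduler. The worry you raise---two in-play agents interacting with each other---simply never produces a coin flip, because two leaders meeting triggers no flip. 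So the sequence of flips in the protocol is literally an i.i.d.\ fair-coin sequence, and the identity $p_{i,j}=2^{-i}(1+p_{i,j+1})$ applies verbatim. This is the point of the sentence in Section~\ref{sec:keyquick} that ``the actual $\plog$ defined in Section~\ref{sec:detailed} completely simulates independent coin flips of leaders''; it is realized by the leader--follower restriction, not by any epoch/timer machinery.

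A smaller issue: your event~(b) conflates two things the paper separates. ``Some $\levelq$ reaches $\lmax$'' is bounded by $n\cdot 2^{-\lmax}$ as you say, but ``every leader eventually sets $\done=\tr$ within the time budget'' is a different event: it needs that each leader meets a follower (and is responder) sufficiently often. The paper handles this by first bounding the number of heads any leader sees by $2\lg n$ w.h.p., then using $|\vf|\ge n/2$ (Lemma~\ref{lemma:status}) and a Chernoff bound to show every leader meets a follower at least $2\lg n$ times in the first $\smallunit$ steps, leaving $\unit-\smallunit$ steps for the epidemic in $\vcand$. Your decomposition should reflect that two-stage structure.
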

\begin{proof}
%By Lemma \ref{lemma:sync},
%it holds with probability $1-O(n^{-1})$
%that execution $\Xi$ reaches a configuation in $\cstart(1)$,
%where $V_1 \cup V_2=V$ holds
%within $\lfloor 18 n \ln n \rfloor$ interactions. 
%Let $C$ be such a configuration that
%$\Xi$ reaches for the first time.
%Note that $C$ is a random variable.
%In what ollows, we prove that
%there exists non-negative numbers
%$\epsilon_2,\epsilon_3,\dots,\epsilon_n$
%such that
%$\sum_{i=2}^{n} \epsilon_i = O(n^{-1})$ and 
%$\Pr(|\vl|=i) < 2^{1-i}+\epsilon_i$ holds for any $i=2,3,\dots,n$
%in configuration $C$.

Coin flips in $\quick()$ are not only fair
but also independent of each other.
This is because we assume the uniformly random scheduler $\Gamma$
and at most one agent makes a coin flip at each step
(\ie at each interaction) since a coin flip is made
only when a leader and a follower meet.
Therefore, an execution of this module
correctly simulates the competition game introduced in Section
\ref{sec:keyquick} and the simulation of the game
finishes within the first $\unit$ interactions
if all of the following conditions hold in $C_{\unit}$; %in execution $\Xi$;
%$\levelq$ of some leader reaches $\lmax$,
%\ie it gets $\lmax$ heads in the coin flips.
\begin{itemize}
 \item every agent $v$ is still in the first epoch,
       \ie $v.\epoch=1$ holds,
 \item $v.\levelq < \lmax$ holds for all $v \in \vcand$
 \item all agents in $\vcand$ has the same $\levelq$
       and $v.\done = \tr$ holds for all $v \in \vcand$.
\end{itemize}
Intuitively, the second condition guarantees that
no agent increases $\levelq$ to the upper limit $\lmax$
within the first $\unit$ interactions
and the third condition means that
every leader finishes coin flips and
the maximum value of $\level_q$ propagates to the whole
sub-population $\vcand$ within the first $\unit$ interactions.
%Looking the second condition in detail,
The second condition $\forall v \in \vcand: v.\levelq < \lmax$
is necessary because if some agent in $\vcand$
increases $\levelq$ to $\lmax$, then it may fail to simulate
the competition game successfully.
%More strict condition $\forall v \in \vcand: v.\levelq < 2 \lg n$
%helps us to analyze the third condition

Note that the competition game guarantees that
exactly $i$ agents survives the game with probability
at most $2^{1-i}$. Therefore, it suffices
to prove that all the three conditions hold with high probability,
\ie with probability $1-O(n^{-1})$.
Since $\cinit{\plog} \in \cstart(0)$ holds,
it directly follows from Lemma \ref{lemma:sync}
that the first condition holds with high probability.
The second condition holds with high probability
because the second condition does not hold
only when some leader
gets head $\lmax$ times in a row
and the probability that such an event happens
is at most $n(1/2)^{\lmax} \le n\cdot 2^{-5\lg n} = O(n^{-1})$.
%It is worthwhile for the rest of this proof to menthion that
%no leader gets head $2\lg n$ times in a row
%with probability $n(1/2)^{2\lg n}=O(n^{-1})$.

In what follows, we prove that the third condition holds
with high probability.
%First, we prove that $v.\done=\tr$ holds for all $v \in \vcand$
%in $C_{\smallunit}$ with high probability.
%Since the second condition holds, it suffices to
%show that every leader 
In the similar way to analyze the probability of the second condition,
we can easily prove that no leader gets head $2\lg n$ times in a row
with high probability. Furthermore,
at each step, any leader meets a follower
with probability at least $|\vf|/{}_n C_2 \ge 1/n$
by Lemma \ref{lemma:status},
hence it holds
with probability $1-n\cdot n^{-2}=1-O(n^{-1})$
by Chernoff bound in the form of $\eqref{eq:lowerhalf}$
in Lemma \ref{lemma:chernoff}
that every leader meets a follower no less than $2 \lg n$ times
during the first $\smallunit (\ge \lceil 6n \lg n \rceil)$ interactions.
%Therefore, $v.\done=\tr$ holds for all $v \in \vcand$
%in $C_{\smallunit}$ with high probability.
Therefore, with high probability,
all agents in $\vcand$ finish making coin flips
within the first $\smallunit$ interactions.
Thereafter, 
the maximum value of $\levelq$
is propagated to the whole sub-population $\vcand$
by one-way epidemic in $\vcand$.
The epidemic finishes within the next $\lfloor 8n \ln n \rfloor$ interactions with high probability by Lemma \ref{lemma:epidemic}.
Since $\smallunit + \lfloor 8n \ln n \rfloor < \unit$,
the third condition also holds with high probability.
%If synchroniation does not fail, execution $\Xi$
%reaches a configuration 
\end{proof}

Note that an execution of $\quick()$ never eliminates all leaders from population because a leader $v$
with $v.\levelq=\max_{u \in \vcand} u.\levelq$ never becomes a follower.

\begin{algorithm}[t]
\caption{$\tourn()$}
\label{al:tourn}
\textbf{Interaction} between initiator $a_0$ and responder $a_1$:
\hspace{-1cm} \verb| |
\begin{algorithmic}[1]  
 \setcounter{ALC@line}{42}
 \IF{$i \in \{0,1\}: a_i \in \vl \wedge a_{1-i} \in \vf \wedge a_i.\ind < \rsize$}
 \STATE $a_i.\rand \gets 2a_i.\rand + i$
 \STATE $a_i.\ind \gets \max(a_i.\ind+1,\rsize)$
 \ENDIF
 \IF{$a_0, a_1 \in \vcand \wedge a_0.\ind = a_1.\ind = \rsize \wedge \exists i \in \{0,1\}: a_i.\rand < a_{1-i}.\rand$}
 \STATE $a_i.\leader \gets \fl$
 \STATE $a_i.\rand \gets a_{1-i}.\rand$
 \ENDIF
\end{algorithmic}
\end{algorithm}

\subsubsection{$\tourn()$}
\label{sec:tourn}
In the key idea depicted in Section \ref{sec:keytourn},
each leader $v$
makes fair coin flips exactly $\lceil \lg m \rceil = \Theta(\log \log n)$
times. However, this requires $\Omega(\log n \cdot \log \log n)$ states per agent
because this procedure requires not only variable $v.\rand$ that stores the results of those flips but also
variable $v.\ind$ to memorize how many times $v$ already made coin flips.
%, which requires additional $\Theta(\log \log \log n)$ bits
%(\ie making the number of states
%$\Theta(\log \log n)$ times larger).
%we cannot store the results of those $\lceil \log_2 m \rceil$ flips in the space of agents with $O(\log n)$ states
Therefore, in an execution of $\tourn()$,
each agent makes fair coin flips only $\rsize = \lceil \frac{2}{3}\lg m \rceil$ times, and we execute this module $\tourn()$ twice.
That is why we assign two epochs (\ie the second and the third epochs) to $\tourn()$.

In an execution of $\tourn()$,
each leader $v$ gets a random number, say \emph{nonce},
uniformly at random from $\{0,1,\dots,2^{\rsize}-1\}$
by making coin flips $\rsize$ times,
and stores it in $v.\rand$  (Line 43-46).
The uniform randomness of this nonce
is guaranteed because
these coin flips are not only fair
but also independent of each other,
as mentioned in Section \ref{sec:quick}.
Leaders who finishes generating a nonce begins
one-way epidemics of the largest value of these nonces
(Lines 47-50).
By Chernoff bound, %we observe that,
it holds with high probability that
all leaders finishes generating its nonce
within $O(\log n)$ parallel time
and the largest value of these nonces
propagates to the whole sub-populations $\vcand$
within $O(\log n)$ parallel time.
Note that an execution of $\tourn()$ never eliminates all leaders from population because a leader 
with the largest nonce never becomes a follower.
%Consider the case there exists $k$ leaders
%joining an execution of $\tourn()$.
%Then, from the similar argument in Section \ref{sec:keytourn},
%the number of leaders decreases to one
%after the execution with probability
%$1-k/2^{\rsize}-O(1/n)=1-O(k/m^{2/3}+1/n)=1-O(k/\log^{2/3} n)$.
%Thus, the number of leaders decereases to one
%after \emph{two} executions of $\tourn()$
%with probabiility $1-O((k/\log^{2/3} n)^2)=1-O(k^2/\log^{4/3}n)$.
%Therefore, combined with the argument in Section \ref{sec:quick},
%there exists exactly one leader in the population
%with probability
%$1-O(1/\log n + 1/n + (\log \log n)^2 /\log^{4/3})=1-O(1/\log n)$
%before some agent $v$ reaches the fourth epoch
%(\ie $v.\epoch=4$),
%ignoraing the probability that synchronization fails.

\begin{lemma}
\label{lemma:tourn}
In an execution $\Xi = \Xi_{\plog}(\cinit{\plog},\Gamma)=C_0,C_1,\dots$,
%With high probability, there exists an interger $t=O(\log n)$
%such that, 
%$C_t \in \ccolor(3)$ holds,
%exactly one leader exists in $C_t$,
%and all agents are still in the third epoch (\ie $\epoch=3$)
%in $C_t$.
%Exactly one leader exists
the number of leaders become exactly one 
before some agent enters the fourth epoch (\ie $\epoch=4$)
%for the first time
with probability $1-O(1/\log n)$.
\end{lemma}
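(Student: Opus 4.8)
The plan is to combine the guarantees of the two modules $\quick()$ and $\tourn()$ (run twice over epochs $2$ and $3$) and then sum over the possible number of leaders entering $\tourn()$. Concretely, I would condition on the number of leaders $i$ that remain at the end of the first epoch, i.e., in the configuration $C_{\unit}$, which by Lemma~\ref{lemma:quick} satisfies $\Pr(|\vl|=i) < 2^{1-i}+\epsilon_i$ with $\sum_i \epsilon_i = O(n^{-1})$. For $i=1$ there is nothing to do, since $\tourn()$ never eliminates all leaders, so a single leader stays a single leader. For each fixed $i \ge 2$, I would argue that a single execution of $\tourn()$ starting from $i$ leaders reduces the count to one with probability at least $1 - i/\lg n$: this is exactly the key-idea computation of Section~\ref{sec:keytourn}, now made rigorous using that the nonces $v.\rand$ are uniform on $\{0,\dots,2^{\rsize}-1\}$ and mutually independent (by the same argument as in the proof of Lemma~\ref{lemma:quick}, at most one leader flips a coin per step), together with a union bound over the $i-1$ "later" leaders each colliding with the running maximum with probability $2^{-\rsize} \le 2^{-\frac{2}{3}\lg m} = m^{-2/3}$.

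First I would handle the synchronization scaffolding: I must show that, with high probability, all agents stay in epoch $2$ long enough for the first $\tourn()$ execution to complete its coin-flip phase and its one-way epidemic of the maximum nonce, and likewise for epoch $3$. This follows from Lemma~\ref{lemma:sync} applied to the appropriate $\cstart(\cdot)$ configurations: $P_1$ gives that no agent advances its color (hence its epoch) for $\unit$ steps w.h.p., and within that window Chernoff bounds (as in Lemma~\ref{lemma:quick}) guarantee every leader meets a follower at least $\rsize$ times within $O(n\log n)$ steps, and then Lemma~\ref{lemma:epidemic} with $n'=|\vcand|\ge n/2$ finishes the nonce epidemic within $O(n\log n)$ further steps. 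So the "clean" behavior of each $\tourn()$ pass holds with probability $1-O(n^{-1})$, and on that event the $m^{-2/3}$ per-leader collision bound is exact.

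The point of running $\tourn()$ twice is to boost the success probability from $1 - i/m^{1/3}$-ish to $1-O(i/m)$. After the first pass, conditioned on starting with $i$ leaders, the number of surviving leaders is $1$ with probability $\ge 1 - i\,m^{-2/3}$; if more than one survives, there are at most $i$ of them (the count is non-increasing), so the second pass again brings it to one with probability $\ge 1 - i\,m^{-2/3}$. Hence after both passes the failure probability is at most $(i\,m^{-2/3})^2 = i^2 m^{-4/3} \le i^2/m$ (using $m = \Omega(\log n)$ so $m^{1/3}\ge i$ in the relevant range, or more carefully just bounding $i^2 m^{-4/3}$ for all $i \le n$), which I can fold into the $O(1/\log n)$ target. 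Then I would sum: the total failure probability is at most
\begin{align*}
\sum_{i=2}^{n}\Pr(|\vl|=i \text{ in } C_{\unit})\cdot (i^2/m)
\;+\; O(n^{-1})
\;\le\; \frac{1}{m}\sum_{i=2}^{n}(2^{1-i}+\epsilon_i)\,i^2 + O(n^{-1})
\;=\; O(1/m) + O(n^{-1}) = O(1/\log n),
\end{align*}
using $\sum_i 2^{1-i} i^2 = O(1)$ and $\sum_i \epsilon_i i^2 = O(n^{-1})\cdot O(\max i^2)$—wait, this needs care.

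The main obstacle I anticipate is precisely that last step: the tail term $\sum_i \epsilon_i i^2$ cannot be bounded by $O(n^{-1})$ without more information about the $\epsilon_i$, since $i$ ranges up to $n$. The cleanest fix is to split the sum at $i = \Theta(\lg n)$: for $i \le 2\lg n$ use $\Pr(|\vl|=i) < 2^{1-i}+\epsilon_i$ and the geometric decay of $2^{1-i}i^2$ plus $\sum \epsilon_i = O(n^{-1})$ times $\mathrm{poly}\log n$; for $i > 2\lg n$, argue directly that $\Pr(|\vl| > 2\lg n \text{ in } C_{\unit}) = O(n^{-1})$ (from the proof of Lemma~\ref{lemma:quick}, since surviving $i$ leaders forces $i$ agents to see $k_i$ consecutive heads, an event of exponentially small probability for large $i$), and on that rare event just bound the contribution by $O(n^{-1})$ regardless of what $\tourn()$ does. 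A second, more mundane obstacle is carefully re-deriving the $1 - i/m^{1/3}$ bound for one $\tourn()$ pass with the \emph{actual} protocol (Algorithm~\ref{al:tourn}): I must check that the nonces are genuinely uniform and independent given the epoch stays fixed, that the one-way epidemic in Lines~47--50 correctly propagates the maximum among \emph{leaders} (not all candidates), and that an agent whose $\ind$ reaches $\rsize$ but whose nonce is not maximal indeed becomes a follower—this is where I'd lean on the already-established machinery from the $\quick()$ analysis rather than re-proving it from scratch.
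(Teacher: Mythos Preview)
Your proposal is correct and follows the same skeleton as the paper's proof: invoke Lemma~\ref{lemma:quick} for the distribution of $|\vl|$ after $\quick()$, use Lemma~\ref{lemma:sync} to ensure each epoch lasts long enough for the coin-flipping and epidemic phases of $\tourn()$ to complete, bound the per-pass failure by $(i-1)\,2^{-\rsize}\le i\,m^{-2/3}$, square for the two passes, and combine.

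The difference lies in the final combination step. You sum $\sum_i \Pr(|\vl|=i)\cdot i^2 m^{-4/3}$ over all $i$, which forces you into the case split at $i=\Theta(\lg n)$ to tame the $\epsilon_i$ terms---an obstacle you correctly identified and patched. The paper instead thresholds once at $i=\lceil\lg\lg n\rceil$: it first observes that
\[
\Pr\bigl(|\vl|>\lceil\lg\lg n\rceil\text{ in }C_{\unit}\bigr)\;\le\;\sum_{i>\lceil\lg\lg n\rceil}2^{1-i}+O(n^{-1})\;=\;O(1/\log n),
\]
and then on the complementary event bounds the two-pass failure \emph{uniformly} by
\[
\bigl((\lceil\lg\lg n\rceil-1)\cdot m^{-2/3}\bigr)^2 \;=\; O\!\left(\frac{(\log\log n)^2}{(\log n)^{4/3}}\right)\;=\;o(1/\log n).
\]
This choice of threshold is precisely what makes your $\epsilon_i$-weighting problem disappear, since only $\sum_i\epsilon_i=O(n^{-1})$ is needed rather than $\sum_i \epsilon_i\, i^2$. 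Your summation route works and is arguably more systematic, but the paper's threshold-and-uniform-bound is shorter and sidesteps the obstacle entirely. (Minor slip: a single pass gives failure $\le i\,m^{-2/3}$, not $i\,m^{-1/3}$ as you wrote in passing; you use the correct exponent in the actual computation.)
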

\begin{proof}
By Lemmas \ref{lemma:sync} and \ref{lemma:quick},
there exist at most $\loglogn$ leaders and all agents are still in the first epoch in configuration $C_{\unit}$
with probability
$1-\left(\sum_{i=\loglogn + 1}^{n} 2^{1-i}\right) - O(n^{-1}) =1-O(1/\log n)$.
Thereafter, execution $\Xi$ reaches a configuration in $\cstart(1)$
within the next $O(n \log n)$ interactions
with high probability
by Lemmas \ref{lemma:status} and \ref{lemma:timer}.

Therefore, in order to prove the lemma,
we can assume that there exists an integer $t'=O(n \log n)$
such that
there exists at most $\loglogn$ leaders in $C_{t'}$,
every agent is in the first or the second epoch
in $C_{t'}$,
and $C_{t'} \in \cstart(1)$ holds.
%Each leader in $C_{t'}$ generates a nonce
%uniformly at random among $\{0,1,\dots,2^{\rsize}-1\}$
%in each of the two exections of $\tourn()$
We say that an execution of module $\tourn()$ finishes
completely if every leader finishes generating
a nonce and
the maximum value of nonces is propagated to
the whole sub-population $\vcand$.
Since a leader generates a nonce
uniformly at random among $\{0,1,\dots,2^{\rsize}-1\}$
in each of the two executions of $\tourn()$,
the same arguments in section \ref{sec:keytourn}
yields that exactly one leader exists with probability
at least
$1-(\loglogn-1)\cdot 2^{-\rsize}\ge 1-O(\log \log n/\log^{2/3} n)$
after one execution of module $\tourn()$ finishes completely.
%where $i$ is the number of leaders when the execution of the module begins.

Each leader generates a nonce
by meeting a follower $\rsize = O(\log m)=O(\log \log n)$ times
while any leader meets a follower
with probability at least $\frac{1}{n}$ at each step
by Lemma \ref{lemma:status}.
Therefore, by Chernoff bound and Lemma \ref{lemma:epidemic},
%every leader generates a nonce
%in $\lceil 17 n \ln n \rceil$ interactions
an execution of $\tourn()$ finishes completely
within $\unit - \lfloor 4 n \ln n \rfloor \ge \lfloor 17 n \ln n \rfloor$ interactions
with high probability for sufficiently large $n$.
%Thereafter, the maximum value of nonces
%is propagated to the whole sub-population $\vcand$
%within $\lfloor 8 n \ln n \rfloor$ interactions
%wit high probability by Lemma \ref{lemma:epidemic}.
Hence, by Lemma \ref{lemma:sync},
both the first and the second executions of $\tourn()$
finish completely in the next $O(n \log n)$ steps
with high probability.
Therefore, by Lemma \ref{lemma:sync},
the two executions of $\tourn$ decreases the number of leaders 
from at most $\loglogn$ to exactly one
before some agent enters the fourth epoch
with probability $1-O((\log \log n/\log^{2/3}n)^2)=1-O(1/\log n)$.
%Lemma \ref{lemma:timer} guarantees
%some agent enters the fourth epoch
%by the next $O(n \log n)$ steps. 
%
%
%there exists an interger $t=O(\log n)$
%such that, 
%$C_t \in \ccolor(3)$ holds,
%exactly one leader exists in $C_t$,
%and all agents are still in the third epoch (\ie $\epoch=3$)
%in $C_t$.
\end{proof}

%\begin{algorithm}[t]
%\caption{$\divides()$}
%\label{al:devide}
%\textbf{Interaction} between initiator $a_0$ and responder $a_1$:
%\hspace{-1cm} \verb| |
%\begin{algorithmic}[1]  
%\IF{$a_0.\status = a_1.\status = X$}
%\STATE $a_0.\status \gets \cand$
%\STATE $(a_1.\status,a_1.\cnt,a_1.\leader)\gets(\timer,0,\fl)$
%\ENDIF
%\end{algorithmic}
%\end{algorithm}

\begin{algorithm}[t]
\caption{$\backup()$}
\label{al:backup}
\textbf{Interaction} between initiator $a_0$ and responder $a_1$:
\hspace{-1cm} \verb| |
\begin{algorithmic}[1]  
 \setcounter{ALC@line}{50}
%\FORALL{$a_i.\tick \wedge a_i.\status = \cand \wedge a_i.\leader$}
%\STATE $a_i.\ready \gets \tr$
%\ENDFOR
\IF{$a_0.\tick \wedge a_0 \in \vl \wedge a_1 \in \vf$}
\STATE $a_0.\levelb \gets \max(a_0.\levelb + 1,\lmax)$
\ENDIF
\IF{$a_0, a_1 \in \vcand \wedge \exists i \in \{0,1\}: a_i.\levelb < a_{1-i}.\levelb$}
\STATE $a_i.\levelb \gets a_{1-i}.\levelb$
\STATE $a_i.\leader \gets \fl$
\ENDIF
\STATE \ifthen{$\forall i \in \{0,1\}: a_i \in \vl$}{$a_1.\leader \gets \fl$}
\end{algorithmic}
\end{algorithm}

\subsubsection{$\backup()$}
\label{sec:backup}
This module $\backup()$ uses only one 
additional variable $\levelb$
to elect the unique leader.
For any $v \in \vcand$,
variables $v.\levelb$ is initialized by
$v.\levelb \gets 0$ 
at the first time $v.\epoch=4$ holds (Line 9).

As long as synchronization succeeds,
each $v.\tick$ is raised every $\Theta(\log n)$
(but sufficiently long) parallel time.
In an execution of $\backup()$,
each leader has a chance of making a coin flip
every time its $\tick$ is raised.
Specifically, a leader $v$ makes a coin flip
when $v$ has an interaction with a follower
and $v.\tick$ is raised at that interaction.
If $v$ sees ``head'' (\ie it is an initiator at that interaction),
it increments $v.\levelb$ by one
unless $v.\levelb$ already reaches $\lmax$ (Lines 51-53).
The largest value of $\levelb$ is propagated
via one-way epidemic in sub-population $\vcand$ (Lines 54-57).
If a leader $v$ observes the larger value of $\levelb$
than $v.\levelb$, it becomes a follower (Line 56).
Furthermore,
this module includes simple leader election
\cite{original};
when two leaders interact and they observe that
they have the same value of $\levelb$ at Line 58,
then the responder becomes a follower.
Note that an execution of $\backup()$
never eliminates all leaders from population because a leader
with the largest value of $\levelb$ never becomes a follower.

\begin{lemma}
\label{lemma:fourth}
Let $C$ be any configuration in $\call(\plog)$
and let $\Xi = \Xi_{\plog}(C,\Gamma)=C_0,C_1,\dots$.
Execution $\Xi$ reaches a configuration where 
all agents are in the fourth epoch
within $O(\log n)$ parallel time with high probability
and in expectation.
\end{lemma}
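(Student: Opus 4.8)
The plan is to bypass the synchronization analysis entirely and exploit a structural fact about Algorithm~\ref{al:pll}: for every agent the variable $\epoch$ is non-decreasing along any execution (Line~9 only increments it and caps it at~$4$, Line~10 replaces it by a maximum, and no other line touches it), and Line~10 propagates the \emph{largest} value of $\epoch$ at every interaction. Hence the set of agents currently in the fourth epoch is exactly an infected set of a one-way epidemic in the whole population~$V$: a fourth-epoch agent stays in the fourth epoch forever, and any agent interacting with a fourth-epoch agent immediately joins it. So it suffices to show (i) that with high probability some agent reaches $\epoch=4$ within $O(\log n)$ parallel time, and (ii) that the ensuing epidemic of ``$\epoch=4$'' reaches every agent within $O(\log n)$ parallel time. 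Part~(ii) is immediate from Lemma~\ref{lemma:epidemic} with $n'=n$: taking $t=2n\ln n$ shows the epidemic finishes within $4n\ln n$ steps with probability at least $1-n^{-1}$.

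For~(i) I would use a timer agent. Every reachable configuration other than $\cinit{\plog}$ already contains at least one agent with status $\timer$, because the first interaction of any schedule assigns status $\timer$ to one agent (Lines~1--3 of Algorithm~\ref{al:pll}) and statuses never change thereafter (Section~\ref{sec:status}); from $\cinit{\plog}$ itself a single step creates one. Fix such a timer $w$. The crucial local claim is that $w.\tick$ is raised at least once within every $\cmax=41m$ consecutive interactions of $w$: on each interaction of $w$, either $w.\cnt$ wraps from $\cmax-1$ to $0$ and $w$ gets a new color (Lines~24--27 of $\countup()$), or $w$ adopts a partner's color that is larger by one modulo~$3$ (Lines~31--33), or $w.\cnt$ merely increments; the first two events set $w.\tick\gets\tr$, and the third can occur at most $\cmax$ times in a row before forcing a wrap. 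Since each raising of $w.\tick$ increases $w.\epoch$ by one until it reaches~$4$, the timer $w$ reaches $\epoch=4$ within at most $3\cmax=O(\log n)$ of its own interactions. As $w$ participates in an interaction with probability $2/n$ at each step, a Chernoff bound in the form of~\eqref{eq:lowerhalf} shows that $w$ accumulates $3\cmax$ interactions within $O(n\log n)$ steps with probability $1-n^{-\Omega(1)}$; the hidden constant depends on the constant in $m=\Theta(\log n)$, which does not matter.

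Combining (i) and (ii) by a union bound yields the high-probability claim: from any reachable configuration the execution reaches a configuration in which all agents are in the fourth epoch within $O(\log n)$ parallel time with probability $1-O(n^{-1})$. The expectation bound follows by a standard restart argument. The statement just proved says that from \emph{every} reachable configuration the goal is reached within some fixed $T=O(n\log n)$ steps with probability at least $1-O(n^{-1})\ge 1/2$; since the configuration after $T$ steps is again reachable, the number of length-$T$ phases needed is stochastically dominated by a geometric random variable of constant mean, so the expected number of steps is $O(n\log n)$, i.e.\ $O(\log n)$ parallel time.

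The main thing to get right here is \emph{not} to attempt the obvious but much harder route of showing that each individual agent climbs through epochs $1,2,3,4$ on its own; in a configuration where synchronization has failed arbitrarily all three values of $\clr$ may coexist and the induced one-way epidemics of the three colors interact cyclically, which would make such a direct argument delicate. The observation about Line~10 avoids this completely, since the private progress of a single timer suffices to pull the whole population into the fourth epoch; what remains --- the local invariant on $w.\tick$ and the two Chernoff/epidemic estimates --- is routine.
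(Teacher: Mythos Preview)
Your proof is correct and follows essentially the same route as the paper: exhibit a timer agent (created by the first interaction if not already present), argue it reaches $\epoch=4$ within $O(\log n)$ parallel time because it raises $\tick$ within every $\cmax$ of its own interactions (this is exactly the content of Lemma~\ref{lemma:timer}'s proof), and then spread $\epoch=4$ to the whole population via Line~10 using Lemma~\ref{lemma:epidemic}. Your write-up is in fact more careful than the paper's on two points---you make explicit why Line~10 turns the fourth-epoch set into a one-way epidemic, and you give the geometric restart argument for the expectation bound that the paper leaves implicit---but the underlying argument is the same.
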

\begin{proof}
%The lemma immdediately follows from
After the first step of $\Xi$, at least one agent $v$ in $\vtimer$
always exists. By Lemma \ref{lemma:timer}, $v$ enters the forth epoch
within $O(\log n)$ parallel time with high probability
and in expectation.
Since the largest value of epoch is propagated to the whole population 
via one-way epidemic, all agents enters the fourth epoch
within $O(\log n)$ parallel time with high probability
and in expectation
by Lemma \ref{lemma:epidemic}.
\end{proof}

\begin{lemma}
\label{lemma:truebackup} 
Let $C$ be any configuration where
all agents are in the fourth epoch
and let $\Xi = \Xi_{\plog}(C,\Gamma)=C_0,C_1,\dots$.
Then $\Xi$ reaches a configuration where
there exists exactly one leader
within $O(n)$ parallel time in expectation.
\end{lemma}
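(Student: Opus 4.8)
The plan is to show that, once all agents are in the fourth epoch, the module $\backup()$ drives the number of leaders down to exactly one within $O(n)$ parallel time in expectation, using two mechanisms: the level-based elimination via coin flips and one-way epidemic, and the crude ``responder loses'' tie-break of Line 58. I will argue that the first mechanism, with high probability, finishes the job within $O(m \log n) = O(\log^2 n)$ parallel time; and that whenever it fails (an event of probability $O(n^{-1})$, or more generously just a small constant), the Line-58 rule alone finishes within $O(n)$ parallel time in expectation. Combining these two contributions gives expected time $O(\log^2 n) + O(n^{-1}) \cdot O(n) \cdot (\text{something}) = O(n)$, which is what the lemma claims (and is anyway dominated by the $O(n)$ term, so the bound is very loose).

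First I would set up the level mechanism precisely. Let $V' \subseteq \vcand$ be the current set of leaders; note $|\vf| \ge n/2$ and $|\vcand| \ge n/2$ by Lemma~\ref{lemma:status}, so every leader meets a follower with probability at least $1/n$ per step and every pair of leaders interacts at a reasonable rate. Following the key idea of Section~\ref{sec:keybackup}: consider the first step at which some leader $v$ reaches level $j$. At that moment every other leader $u$ has $u.\levelb < j$, and before the epidemic of the value $j$ reaches $u$ (which happens within $O(\log n)$ parallel time w.h.p. by Lemma~\ref{lemma:epidemic} applied with $V' = \vcand$, $n' = |\vcand| \ge n/2$), the leader $u$ makes at most one coin flip w.h.p. (again a Chernoff bound: an agent participates in only $O(\log n)$ interactions in $O(\log n)$ parallel time w.h.p., and among those only the ones with $\tick$ raised count — but even without that refinement, at most $O(\log n)$ flips, which is enough if we wait long enough). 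Each such $u$ that flips tail before seeing level $j$ becomes a follower. So with probability at least $1/3$ (the $1/2$ from the tail, minus $O(n^{-1})$ slack), at least half of the leaders in $V' \setminus \{v\}$ are eliminated at this level-increase event. Since the level of the frontrunner increases (w.h.p., in the synchronized regime) every $\Theta(\log n)$ parallel time, and caps at $\lmax = 5m = \Theta(\log n)$, there are $\Theta(\log n)$ such halving opportunities; a Chernoff bound on the number of successful halvings shows that, w.h.p., the leader count reaches $1$ before the frontrunner's level reaches $\lmax$. This part takes $O(m \log n) = O(\log^2 n)$ parallel time w.h.p.

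If this fails — i.e., more than one leader survives when all leaders have $\levelb = \lmax$ — then the level mechanism is exhausted and we fall back on Line~58 alone: whenever two leaders interact, the responder becomes a follower. With $k$ leaders present, a leader-leader interaction occurs with probability $\binom{k}{2}/\binom{n}{2} \ge (k-1)/(n(n-1))$ per step, so the expected number of steps to go from $k$ to $k-1$ leaders is at most $n(n-1)/(k-1)$; summing over $k = 2, \dots, n$ gives $O(n^2 \ln n)$ steps, i.e., $O(n \ln n)$ parallel time. (One should double-check that no later epoch-change or level-change can resurrect the halving dynamics in a way that stalls Line~58 — but leaders can only decrease, and Line~58 always applies at any leader-leader interaction regardless of level, so this bound is unconditional.) Thus the total expectation is at most $O(\log^2 n) + O(n \ln n) = O(n)$ parallel time — in fact even the unconditional fallback bound $O(n \ln n)$ already proves the lemma, so the probabilistic first part is not strictly needed for this (loose) statement, though it is needed for the sharper $O(\log^2 n)$ claim invoked in Section~\ref{sec:keybackup}.

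**The main obstacle** I anticipate is handling the interaction between synchronization and the level mechanism cleanly. The halving argument above implicitly assumes the ``synchronized regime'': that $\tick$ flags fire roughly every $\Theta(\log n)$ parallel time and uniformly enough across agents that ``$u$ flips at most once before seeing level $j$'' holds w.h.p. But Lemma~\ref{lemma:sync} only guarantees this starting from a configuration in $\cstart(i)$, whereas here $C$ is an \emph{arbitrary} configuration with all agents in epoch $4$ — synchronization might currently be broken. The clean way around this is to note that, by Lemma~\ref{lemma:sync}'s proposition $P_3$ (plus Lemmas~\ref{lemma:status} and \ref{lemma:timer}), from any reachable configuration the execution reaches some $\cstart(i)$ within $O(\log n)$ parallel time w.h.p.\ and in expectation, after which the synchronized analysis applies; and in the meantime nothing bad happens because leaders only decrease and Line~58 is always available. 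If one wants to avoid even this much care, the proof can simply lean entirely on the unconditional Line-58 bound of $O(n \ln n)$ parallel time and ignore the level mechanism — which suffices for the stated $O(n)$ bound, at the cost of not reproving the $O(\log^2 n)$ refinement here.
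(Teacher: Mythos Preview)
Your proposal contains a genuine gap in the fallback analysis, and the paper's proof is in fact much simpler than yours.

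\textbf{The paper's proof.} The paper observes only that Line~58 of $\backup()$ (together with Lines 54--57) implements the classical ``simple'' leader election rule of \cite{original}: whenever two leaders meet, one of them becomes a follower. That rule alone elects a unique leader in $O(n)$ expected parallel time, which is exactly the claim. All of your level-based halving analysis is unnecessary here; that analysis belongs to Lemma~\ref{lemma:backup} (the $O(\log^2 n)$ bound), not to Lemma~\ref{lemma:truebackup}.

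\textbf{The gap in your fallback bound.} In your Line-58 analysis you lower-bound the probability of a leader--leader interaction by
\[
\frac{\binom{k}{2}}{\binom{n}{2}} \;\ge\; \frac{k-1}{n(n-1)},
\]
throwing away a factor of $k$. This yields an expected $n(n-1)/(k-1)$ steps to go from $k$ to $k-1$ leaders, and summing gives $O(n^2 \ln n)$ steps, i.e., $O(n\ln n)$ parallel time. You then write ``$O(\log^2 n)+O(n\ln n)=O(n)$'' and ``the unconditional fallback bound $O(n\ln n)$ already proves the lemma'' --- both of these are false, since $O(n\ln n)\not\subseteq O(n)$. The fix is to keep the exact value $\binom{k}{2}/\binom{n}{2}=k(k-1)/(n(n-1))$: then the expected number of steps to drop from $k$ to $k-1$ leaders is $n(n-1)/(k(k-1))$, and since $1/(k(k-1))=1/(k-1)-1/k$, the sum over $k=2,\dots,n$ telescopes to $n(n-1)(1-1/n)<n^2$ steps, i.e., $O(n)$ parallel time. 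With this correction your fallback alone already proves the lemma, matching the paper's one-line argument; the level-mechanism discussion can be dropped entirely.
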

\begin{proof}
Execution $\Xi$ elects the unique leader
within $O(n)$ parallel time in expectation
because module $\backup()$ includes the simple leader election
mechanism \cite{original}, \ie one leader becomes a follower
when two leaders meet. 
\end{proof}

\begin{definition}
We define $\bstart$
as the set of all configurations in $\ccolor(0)$
where every agent is in the fourth epoch
(\ie $\epoch = 4$),
and $v.\levelb \le 1$
holds for all agents $v \in \vcand$.
\end{definition}

\begin{lemma}
\label{lemma:bstart} 
%Then, executionn
Execution $\Xi = \Xi_{\plog}(\cinit{\plog},\Gamma)$
reaches a configuration in $\bstart$
within $O(\log n)$ parallel time
with high probability. 
\end{lemma}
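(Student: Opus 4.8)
The plan is to track the population through the three synchronization rounds that move every agent from the first epoch into the fourth, applying Lemma~\ref{lemma:sync} once per round, and then to verify that the configuration so reached lies in $\bstart$. First I would note that $\cinit{\plog}\in\cstart(0)$: all agents have color~$0$, there are no timer agents (so the $\cnt$-condition is vacuously true), and no agent has color~$1$. Then I would chain four applications of Lemma~\ref{lemma:sync}. With $i=0$ from $C_0=\cinit{\plog}$, proposition $P_3$ gives a configuration $C^{(1)}\in\cstart(1)$ within $O(\log n)$ parallel time with high probability; applying the lemma from $C^{(1)}$ with $i=1$ gives $C^{(2)}\in\cstart(2)$ within a further $O(\log n)$ parallel time; applying it from $C^{(2)}$ with $i=2$ gives $C^{(3)}\in\cstart(0)$ within another $O(\log n)$ parallel time, and by $P_2$ of this third application the execution visits a configuration in $\ccolor(2)$ during that round; and finally, applying it from $C^{(3)}$ with $i=0$, proposition $P_2$ gives that the execution reaches a configuration $C^*\in\ccolor(0)$ within $\lfloor 4n\ln n\rfloor$ further steps. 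Each re-application is legitimate because the suffix of $\Xi$ from any reachable configuration is again an execution of $\plog$ under $\Gamma$. A union bound over the $O(1)$ high-probability events used makes all these bounds hold simultaneously with probability $1-O(1/n)$, and together they cost $O(\log n)$ parallel time, so it remains to show $C^*\in\bstart$.

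For the epoch condition, I would use that in $\countup()$ an agent's color only ever changes by $+1\pmod{3}$: a timer increments its own color at a counter wrap, and the propagation rule copies a neighbour's color only when it is one larger mod~$3$. Hence an agent's color always equals its number of color changes mod~$3$, and each color change raises $\tick$ and advances $\epoch$ toward its maximum value~$4$. Since the execution visits a configuration in $\ccolor(2)$ before $C^*$, every agent has at that time made a number of color changes $\equiv 2\pmod{3}$, hence at least~$2$, so $\epoch\ge 3$; and in $C^*\in\ccolor(0)$ every agent has made a number of color changes $\equiv 0\pmod{3}$ and no fewer, hence at least~$3$, so every agent has $\epoch=4$ in $C^*$.

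For the remaining condition I would argue that $C^{(3)}$ is the first configuration in which any agent holds color~$0$: color~$0$ can originate only at a timer's counter wrap (propagation only copies an already-present color~$0$), and at the instant the first timer wraps from color~$2$ to color~$0$ no agent yet holds color~$1$ and that timer has $\cnt=0$, so that configuration already lies in $\cstart(0)$. Hence no $\vcand$-agent reaches the fourth epoch before $C^{(3)}$. A $\vcand$-agent resets $\levelb\gets 0$ the instant it first enters the fourth epoch (Line~13 of Algorithm~\ref{al:pll}), and $\levelb$ is thereafter incremented only at an interaction where that agent's $\tick$ is raised (Line~52 of $\backup()$), i.e.\ only when it gets a new color. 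By $P_1$ of the fourth application no agent gets color~$1$ within $\unit$ steps of $C^{(3)}$, and $C^*$ is reached within $\lfloor 4n\ln n\rfloor<\unit$ steps of $C^{(3)}$, so every $\vcand$-agent's $\tick$ is raised at most once between its entry into the fourth epoch and $C^*$; thus its $\levelb$ rises from $0$ to at most~$1$, and the $\levelb$-epidemic among fourth-epoch $\vcand$-agents cannot break this bound. Therefore $v.\levelb\le 1$ for all $v\in\vcand$ in $C^*$, so $C^*\in\bstart$.

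The part I expect to require the most care is this coupling in the last two paragraphs — making precise that ``all agents at color~$0$ at the end of the chain'' means ``all agents in the fourth epoch, each $\vcand$-agent with $\levelb\le 1$''. It rests on the monotone mod-$3$ progression of colors and on placing the $P_1$/$P_2$-windows carefully (so that no agent enters the fourth epoch too early relative to $C^{(3)}$ and color~$1$ does not reappear before $C^*$); the three-fold chaining of Lemma~\ref{lemma:sync} itself is routine.
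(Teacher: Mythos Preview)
Your proposal is correct and follows the same route the paper intends: the paper's own proof is a single sentence saying the lemma ``directly follows from Lemma~\ref{lemma:sync} and the definition of $\backup()$,'' and what you have done is spell out exactly that---chaining $P_1$, $P_2$, $P_3$ of Lemma~\ref{lemma:sync} through three color rounds and then checking, via the reset in Line~13 and the $\tick$-gated increment in Line~52, that the configuration reached satisfies all three clauses of $\bstart$. One small wording point: your claim that ``$C^{(3)}$ is the first configuration in which any agent holds color~$0$'' should be read as the first \emph{reappearance} of color~$0$ after the execution has passed through $\ccolor(2)$ (since all agents start with color~$0$); with that reading the argument is sound, and the rest is exactly the analysis the paper leaves implicit.
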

\begin{proof}
This lemma directly follows from
Lemma \ref{lemma:sync}
and the definition of $\backup()$.
\end{proof}

\begin{lemma}
\label{lemma:backup} 
Let $C$ be any configuration in $\bstart$
%where all agents are in the third epoch
and let $\Xi = \Xi_{\plog}(C),\Gamma)=C_0,C_1,\dots$.
Then $\Xi$ reaches a configuration where
there exists exactly one leader
within $O(\log^2 n)$ parallel time
%both with high probability and in expectation.
in expectation
\end{lemma}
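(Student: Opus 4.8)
The plan is to follow exactly the key-idea argument sketched in Section \ref{sec:keybackup}, made rigorous using the synchronization guarantees of Lemma \ref{lemma:sync} and a Chernoff bound. Fix $C \in \bstart$, so all agents are in the fourth epoch, all have color $0$, and $v.\levelb \le 1$ for every $v \in \vcand$; in particular $\backup()$ is the only module that will ever execute from now on, and since $\vtimer \ne \emptyset$ and every timer agent keeps cycling its color, Lemma \ref{lemma:sync} (applied repeatedly, with the failure probability $O(\log n / n)$ each round) tells us that the population keeps re-synchronizing: with high probability there are $\Theta(1)$ color rounds per $\Theta(\log n)$ parallel time, each long enough that a one-way epidemic in $\vcand$ completes and that every leader has $\Omega(1)$ expected chances to meet a follower. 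The first step is to formalize one ``level-$j$ round'': I would let $t_j$ be the first step at which some leader $v$ has $v.\levelb = j$, let $V_j \subseteq \vl$ be the set of leaders at step $t_j$, and argue that with probability $\ge 1/2 - O(n^{-1}) > 1/3$, between $t_j$ and the completion of the next one-way epidemic of $\levelb$ in $\vcand$, at least half of the leaders in $V_j \setminus \{v\}$ make a ``tail'' coin flip (i.e. act as responder when interacting with a follower while their $\tick$ is raised) before they ever raise $\levelb$ to $j$, hence become followers when $v$'s value $j$ reaches them. This uses: (a) each such leader makes at most one relevant coin flip during the window, w.h.p., because a level-$j$ epidemic finishes within one synchronization period w.h.p. by Lemma \ref{lemma:epidemic} and $|\vf| \ge n/2$, $|\vcand| \ge n/2$ by Lemma \ref{lemma:status}; (b) the coin flips are fair and mutually independent, as argued in Section \ref{sec:quick}. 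The conclusion of this step is that the number of leaders drops from $|V_j|$ to at most $1 + \lfloor |V_j|/2 \rfloor$ with probability $\ge 1/3$, independently across levels (conditioned on synchronization holding).

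The second step is the amplification: since every leader's $\levelb$ reaches $5m$ within $O(m \log n) = O(\log^2 n)$ parallel time w.h.p. (each increment costs one successful coin flip, spaced $\Theta(\log n)$ parallel time apart by the clock, so $5m$ increments cost $O(m \log n)$ parallel time), and since at each of the $5m$ levels the leader count is at least halved with probability $\ge 1/3$ independently, a Chernoff bound (in the form of \eqref{eq:lowerhalf} in Lemma \ref{lemma:chernoff}) shows that among the first, say, $4m \ge 4 \lg n$ levels, at least $\lg n$ of them are ``successful'' halvings w.h.p.; since the initial leader count is at most $n$, after $\lg n$ halvings only one leader remains. Hence with probability $1 - O(n^{-1})$ a unique leader is elected within $O(\log^2 n)$ parallel time. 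The third step handles the low-probability bad events: if synchronization fails at some point, or if too few halvings occur, we fall back on Lemma \ref{lemma:truebackup} — the simple pairwise leader-election mechanism built into $\backup()$ (Line 58) — which elects a unique leader within $O(n)$ parallel time in expectation from any configuration in $\call(\plog)$ whose agents are all in the fourth epoch. Combining, the expected parallel time is at most $(1 - O(n^{-1})) \cdot O(\log^2 n) + O(n^{-1}) \cdot O(n) = O(\log^2 n)$, which is the claim. Throughout I also invoke the observation (stated right after Algorithm \ref{al:backup}) that $\backup()$ never eliminates the leader holding the maximum $\levelb$, so ``at least one leader always survives'' and the process is genuinely a leader election.

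The main obstacle I expect is making the ``at most one coin flip per leader during a level-$j$ window'' claim airtight together with the independence of the per-level halving events. The subtlety is that the window for level $j$ — from $t_j$ until the level-$j$ epidemic saturates $\vcand$ — has random length, and a slow leader might linger at level $j-1$ or $j$ across two clock ticks and flip twice, which would break the clean ``halving with probability $1/3$'' bound; one must show this happens to \emph{any} leader only with probability $O(n^{-1})$ per level and union-bound over the $O(m) = O(\log n)$ levels, using that a single epidemic in a sub-population of size $\ge n/2$ finishes in $O(\log n)$ parallel time w.h.p. (Lemma \ref{lemma:epidemic} with $n' \ge n/2$) while a clock tick takes $\Theta(\log n)$ parallel time with the constants chosen (via $\cmax = 41m$ and $\unit = \lfloor 21 n \ln n \rfloor$) so that the epidemic reliably beats the next tick. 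A secondary but routine obstacle is bounding the \emph{expected} time for all levels to be reached by $5m$: one shows each successful coin flip occurs within $O(\log n)$ parallel time in expectation (geometric number of clock ticks, each $\Theta(\log n)$ parallel time), so $5m$ of them take $O(m\log n) = O(\log^2 n)$ in expectation, and the rare resets of $\cnt$ on color propagation only speed this up. Once these two points are nailed down, the rest is the bookkeeping already displayed in Section \ref{sec:keybackup}.
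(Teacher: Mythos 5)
Your proposal is correct and follows essentially the same route as the paper's proof: condition on repeated application of Lemma \ref{lemma:sync} so that no leader flips twice per level window, show the maximum $\levelb$ climbs to $\lmax$ in $O(\log^2 n)$ expected parallel time, argue a constant-probability halving of the leader set at each level increase, amplify over the $\Theta(m)$ levels by Chernoff, and fall back on Lemma \ref{lemma:truebackup} for the low-probability failure event. The only (harmless) cosmetic differences are your use of the $1/3$ halving constant from Section \ref{sec:keybackup} versus the paper's $1/2-O(1/n)$, and a slightly optimistic $O(n^{-1})$ in the final failure-probability bookkeeping where the paper carries $O(\log n/n)$; both yield $O(\log^2 n)$.
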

\begin{proof}
%Starting from any configuration $C' \in \call(\plog)$
%where all agents are in the fourth epoch,
%execution $\Xi_{\plog}(C),\Gamma)$ elects the unique leader
%within $O(n)$ parallel time in expectation
%because module $\backup()$ includes the simple leader election
%mechanism \cite{original}, \ie one leader becomes a follower
%when two leaders meet.

By applying Lemma \ref{lemma:sync} repeatedly,
it holds for sufficiently large $O(\log^2 n)$ parallel time
with probability $1-O(\log n/ n)$ that
synchronization does not fail and
no agent raises $\tick$ twice within
any $\unit-\lceil 4n \ln n \rceil \ge \lceil 17 \ln n \rceil$ steps.
Thus, in the following, we assume that no agent raises $tick$ twice within any $\lceil 17 \ln n \rceil$ steps.

%synchronization succeeds
%As long as synchronization suceeds,

Define $B = \max_{v \in \vcand \cap V_4} v.\levelb$.
%When $\vcand \cap V_4 = \emptyset$, we define $B=0$
%for simplicity.
By Lemma \ref{lemma:sync},
each leader raises its $\tick$
in every $O(\log n)$ parallel time with high probability
and makes a coin flip if it meets a follower
at that interaction.
Thus, 
each leader increments its $\levelb$ with probability $1/4$
in every $O(\log n)$ parallel time
because $|\vf| \ge n/2$ by Lemma \ref{lemma:status}.
Therefore, the maximum value $B$ is increased by one
within $O(\log n)$ parallel time in expectation,
thus $B$ reaches $\lmax$ within $O(\log^2 n)$ parallel time
in expectation.

Consider that now $B$ is increased from $k$ to $k+1$.
At this time,
only one leader has the largest value $B=k+1$ in $\levelb$.
Thereafter, this value $k+1$ is propagated to the whole
sub-population $\vcand$ within $\lceil 8n \ln n \rceil$ steps
with high probability by Lemma \ref{lemma:epidemic},
during which no leader makes coin flips twice
by the above assumption.
Therefore, the number of leaders decreases almost by half,
specifically decreases from $1+i$
to at most $1+\lfloor i/2\rfloor$,
with probability $1/2-O(1/n)$.
%if synchronization and the one-way epidemic succeeds.
Clearly, in execution $\Xi$,
$B$ is eventually 
increased from $0$ or $1$ to $\lmax$.
Therefore, ignoring the probability that synchronization fails
or the one-way epidemic does not finish 
within $\lfloor 8n \ln n \rfloor$ steps,
we can observe by Chernoff bound that the number of leaders becomes one before $B$ reaches $\lmax$
with probability $1-O(\log n /n)$.

Even if $B$ reaches $\lmax$
before one leader is elected,
%
%during the first $O(\log^2 n)$ parallel time,
$\Xi$ elects the unique leader
within $O(n)$ parallel time in expectation thereafter
by Lemma \ref{lemma:truebackup}.
Therefore, $\Xi$ reaches a configuration where exactly one leader exists within
$O(\log^2 n) + O(\log n / n) \cdot O(n) = O(\log^2 n)$ in expectation.
\end{proof}

\begin{theorem}
\label{theorem:expectation}
 Let $\Xi = \Xi_{\plog}(\cinit{\plog},\Gamma)=C_0,C_1,\dots$.
Execution $\Xi$ reaches a configuration where 
exactly one leader exists within $O(\log n)$ parallel time
in expectation.
\end{theorem}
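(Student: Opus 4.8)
The plan is to combine the three module-analysis lemmas with the synchronization lemmas by a case split on whether the "fast path" ($\quick()$ followed by two rounds of $\tourn()$) succeeds. First I would recall the good event: by Lemma~\ref{lemma:tourn}, with probability $1-O(1/\log n)$ the execution reaches a configuration with exactly one leader before any agent enters the fourth epoch. On this event I need to argue that the time spent is only $O(\log n)$ parallel time: this follows by chaining Lemma~\ref{lemma:sync} a constant number of times (once for the $\quick()$ phase in epoch~$1$, once for each of the two $\tourn()$ phases in epochs~$2$ and~$3$) together with Lemmas~\ref{lemma:status}, \ref{lemma:timer}, and~\ref{lemma:epidemic}, each of which contributes $O(\log n)$ parallel time with high probability. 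Since each application of Lemma~\ref{lemma:sync} guarantees $\Theta(\log n)$-length windows of colour stability with high probability, the whole fast path completes in $O(\log n)$ parallel time with probability $1-O(1/\log n)$, and once exactly one leader remains no agent ever changes its output (the configuration is in $\mathcal{S}_{\plog}$), so the execution has stabilized.

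Next I would handle the complementary "bad" event, which has probability $O(1/\log n)$. Here I cannot control how many leaders survive $\quick()$ and $\tourn()$, nor whether synchronization failed along the way, but I can fall back on $\backup()$. By Lemma~\ref{lemma:fourth}, from \emph{any} reachable configuration the execution reaches a configuration with all agents in the fourth epoch within $O(\log n)$ parallel time in expectation; from there Lemma~\ref{lemma:truebackup} gives a unique leader within $O(n)$ parallel time in expectation. So conditioned on the bad event the expected additional time is $O(n)$ parallel time. Multiplying the $O(1/\log n)$ probability of the bad event by this $O(n)$ bound contributes $O(n/\log n)$ — which is too large on its own, so the argument has to be slightly more careful: I would instead observe that even on the bad event, with high probability ($1-O(n^{-1})$) $\backup()$ finishes in $O(\log^2 n)$ parallel time (via Lemmas~\ref{lemma:bstart} and~\ref{lemma:backup}), and only on a further $O(n^{-1})$-probability sub-event do we pay the $O(n)$ simple-election bound. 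Combining, the total expectation is
\begin{align*}
\bigl(1-O(1/\log n)\bigr)\cdot O(\log n)
+ O(1/\log n)\cdot O(\log^2 n)
+ O(n^{-1})\cdot O(n)
= O(\log n).
\end{align*}

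The cleanest way to organize this is to define a few stopping times: $\tau_1$ = first step at which some agent is in epoch $4$ or exactly one leader exists; and the final stabilization time $\tau$. On $\{|\vl|=1 \text{ at } \tau_1\}$ we are done in $O(\log n)$; on the complement we restart the analysis of $\backup()$ from $C_{\tau_1}$, which is a reachable configuration, and apply the $\backup()$ lemmas. A subtlety worth flagging is that Lemma~\ref{lemma:backup} is stated for starting configurations in $\bstart$ (all agents in epoch $4$ with $\levelb \le 1$), whereas after a synchronization failure an agent could already have $\levelb > 1$; this is why Lemma~\ref{lemma:bstart} routes through a fresh colour round to re-establish $\bstart$, and I would cite that together with Lemma~\ref{lemma:fourth} to reach $\bstart$ from an arbitrary configuration in $O(\log n)$ expected parallel time.

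The main obstacle I expect is the bookkeeping of the "bad event" probability against the slow fallback: one must make sure the rare-but-slow branch ($O(n)$ parallel time from the simple election) is weighted by $O(n^{-1})$ rather than by $O(1/\log n)$, which requires separating "synchronization fails somewhere" (probability $O(\log n/n)$, from iterating Lemma~\ref{lemma:sync} over an $O(\log n)$-length epoch-$4$ window as in the proof of Lemma~\ref{lemma:backup}) from "the fast path elected more than one leader but synchronization held" (probability $O(1/\log n)$, but then $\backup()$ still costs only $O(\log^2 n)$). Everything else is an assembly of already-established $O(\log n)$-with-high-probability statements, so once the probabilities are partitioned correctly the final summation is routine.
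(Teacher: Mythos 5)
Your proposal is correct and follows essentially the same route as the paper's proof: a three-way partition into the fast path (probability $1-O(1/\log n)$, time $O(\log n)$ via Lemmas~\ref{lemma:tourn} and~\ref{lemma:fourth}), the $\backup()$ path through $\bstart$ (probability $O(1/\log n)$, time $O(\log^2 n)$ via Lemmas~\ref{lemma:bstart} and~\ref{lemma:backup}), and the rare total-failure path (probability $O(n^{-1})$, time $O(n)$ via Lemmas~\ref{lemma:fourth} and~\ref{lemma:truebackup}), summing to exactly the paper's final bound $O(\log n)+O(1/\log n)\cdot O(\log^2 n)+O(1/n)\cdot O(n)=O(\log n)$. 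Your explicit care in weighting the $O(n)$ branch by $O(n^{-1})$ rather than $O(1/\log n)$ is the same bookkeeping the paper performs implicitly.
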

\begin{proof}
First, Lemmas \ref{lemma:tourn} and \ref{lemma:fourth},
execution $\Xi$ reaches a configuration where
exactly one leader exists within $O(\log n)$ parallel time
with probability $1-O(1/\log n)$.
Second, by Lemma \ref{lemma:bstart}, %and \ref{lemma:bakcup}, 
execution $\Xi$ reaches a configuration in $\bstart$
%where all agent are in the third epoch
within $O(\log n)$ parallel time with high probability.
Thereafter, execution $\Xi$ reaches a configuration where 
exactly one leader exists within $O(\log^2 n)$ parallel time
in expectation by Lemma \ref{lemma:backup}.
Finally, Lemmas \ref{lemma:fourth} and \ref{lemma:truebackup}
shows that starting from any configuration in $\call(\plog)$,
$\Xi$ reaches a configuration where exactly one leader is
elected within $O(n)$ parallel time in expectation.
To conclude, starting from initial configuration $\cinit{\plog}$,
execution $\Xi$ reaches a configuration where the unique leader is elected within
$O(\log n)+ O(1/\log n)\cdot O(\log^2 n)+O(1/n)\cdot O(n)=O(\log n)$ parallel time in expectation.
\end{proof}

\section{Discussion towards Symmetric Transitions}
\label{sec:symmetric}
In the field of PP model,
several works are devoted to design
a \emph{symmetric protocol}.
Suppose that two agents have an interaction
and their states changes from $p,q$ to $p',q'$, respectively.
A protocol is symmetric if
$p=q \Rightarrow p'=q'$ always hold.
In other words, a symmetric protocol
is a protocol that does not utilize the roles of the two agents
at an interaction, initiator and responder.
This property is important for some applications such as
chemical reaction networks.

The proposed protocol $\plog$ described above is not symmetric,
however, we can make it symmetric 
by the following strategy.
%Actually, we present $\plog$ as an asymmetric protocol
%only for simplicity of explanation and analysis
%of stabilization time.
Protocol $\plog$ performs asymmetric actions 
only for assignment of status
(Section \ref{sec:status})
and flipping fair and independent coins
(Sections \ref{sec:quick}, \ref{sec:tourn}, and \ref{sec:backup}).
%described in Sections \ref{sec:status}
%and \ref{sec:coinflips}, respectively.
To assign the agents their statuses
by symmetric transitions,
we only have to add additional status $Y$ 
and make the following three rules:
%\begin{align*}
$X \times X \to Y \times Y$, \hspace{5pt}
$Y \times Y \to X \times X$, \hspace{5pt}
$X \times Y \to A \times B$. 
%\end{align*}
Furthermore, similarly to the original rules of $\plog$,
when an agent $v$ with status $X$ or $Y$ meets
an agent with status $A$ or $B$,
$v$ gets status $A$ but it becomes a follower.
This modification does not make any harmful influence
on the analysis of stabilization time,
at least asymptotically.
Coin flips are dealt with in the same way.
We assign a \emph{coin status}
$J$, $K$, $F_0$, or $F_1$ to each follower.
Every time a leader $v$ becomes a follower,
initial status $J$ is assigned to $v$.
Thereafter, when two followers meet,
they change their coin statuses
according to the following rules:
$J \times J \to K \times K$, \hspace{5pt}
$K \times K \to J \times J$, \hspace{5pt}
$J \times K \to F_0 \times F_1$. 
These rules guarantees that
the numbers of the followers with state $F_0$
and $F_1$ are always equal.
Therefore, a leader can make a fair and independent
coin flip every time it meets a follower
whose coin state is $F_0$ or $F_1$.
If it meets a follower with coin state $F_0$ (resp.~$F_1$),
it recognizes that the result of the flip is head (resp.~tail).

\section{Conclusion}
In this paper, we gave
a leader election protocol
with logarithmic stabilization time 
and with logarithmic number of agent stats in the 
population protocol model.
%Specifically, in an execution of the proposed protocol,
Given a rough knowledge $m$ of the population size $n$
such that $m \ge \log_2 n$ and $m=O(\log n)$,
the proposed protocol guarantees that
exactly one leader is elected from $n$ agents within $O(\log n)$ parallel time in expectation.
%The number of states per agent of the protocol is $O(\log n)$.

%\newpage
\printbibliography
%\bibliographystyle{plainurl}
%\bibliography{paper} 
\end{document}